\newtheorem{theorem}{Theorem}
\newtheorem{proof}[theorem]{Proof}
\newtheorem{example}[theorem]{Example}
\newtheorem{lemma}[theorem]{Lemma}
\newtheorem{proposition}[theorem]{Proposition}
\newcommand{\cqfd}{\hfill $\square$}
\begin{document}
\begin{center}
%\begin{frontmatter}
\Large{$\mathcal{R}(p,q)-$ deformed super Virasoro $n-$ algebra}\\
%\tnotetext[mytitlenote]{Fully documented templates are available in the elsarticle package on \href{http://www.ctan.org/tex-archive/macros/latex/contrib/elsarticle}{CTAN}.}
\vspace{0,5cm}
Fridolin Melong\\
\vspace{0.25cm}
%% Group authors per affiliation:
%\author{Fridolin Melong\fnref{myfootnote}}
%\address{Institut f\"ur Mathematik}
%\fntext[myfootnote]{Since 1880.}

%% or include affiliations in footnotes:
%\author[mymainaddress,mysecondaryaddress]{Elsevier Inc}
%\ead[url]{www.elsevier.com}

%\author[mymainaddress]{Fridolin Melong\corref{mycorrespondingauthor}}
%\cortext[mycorrespondingauthor]{Corresponding author}
%\ead{fridomelong@gmail.com}

%\address[mymainaddress]
{\em Institut f\"ur Mathematik, Universit\"at Z\"urich,\\Winterthurerstrasse 190, CH-8057, Z\"urich, Switzerland\\
	fridomelong@gmail.com\\
	%and Department of mathematics,Faculty of Sciences, \\ University of Yaounde 1, P O Box 812, Yaounde, Cameroon
}
%\address[mysecondaryaddress]{International Chair in Mathematical Physics
%	and Applications (ICMPA-UNESCO Chair), University of Abomey-Calavi,
%	072 B.P. 50 Cotonou,   Benin Republic, 	\\
%	and International Centre for Research and Advanced Studies in Mathematical and Computer Sciences, and Applications (ICRASMCSA), 072 B.P. 50 Cotonou, Benin Republic}
\end{center}
\begin{abstract}
In this paper, we construct the super Witt algebra and super Virasoro algebra in the framework of the $\mathcal{R}(p,q)-$ deformed quantum algebras. Moreover, we perform  the super $\mathcal{R}(p,q)-$ deformed Witt $n-$ algebra, the $\mathcal{R}(p,q)-$ deformed Virasoro $n-$ algebra and discuss the super $\mathcal{R}(p,q)-$ Virasoro $n-$ algebra ($n$ even ). Besides, we define and construct another super $\mathcal{R}(p,q)-$ deformed Witt $n-$ algebra and  study a toy model for the super $\mathcal{R}(p,q)-$ Virasoro constraints. Relevant particular cases  induced from the  quantum algebras known in the literature are deduced from the formalism developped.
\end{abstract}

%\begin
{\bf keyword}
	$\mathcal{R}(p,q)-$ calculus, Super Virasoro algebra, super-Virasoro constraints.
%\texttt{elsarticle.cls}\sep \LaTeX\sep Elsevier \sep template
%\MSC[2010] 00-01\sep  99-00
%\end{keyword}

%\end{frontmatter}
\tableofcontents
%\linenumbers
\section{Introduction}\label{sec1}

The  nature of the Virasoro algebra  was described by  
Kupershmidt \cite{KB}. Its applications in mathematics and physics, such that in  conformal field theory and string theory  were also presented \cite{BPZ,RY, KB}. Many   generalizations  and deformations (one or two parameters) of the  Virasoro algebra were investigated in the literature\cite{AS,CJ,Hounkonnou:2015laa}.
The generalization of   Kupershmidt's work was provided in  \cite{KB}.
The relation between the Korteweg-de Vries (KdV) equation and the Virasoro algebra was described  by Gervais \cite{G} and Kupershmidt \cite{KB}. Moreover, 
Huang and Zhdanov  presented the realizations of  Witt and Virasoro algebras. Their connection with integrable equations was  determined\cite{HZ}.

The construction of $\alpha^k$ derivation and a representation theory were investigated \cite{AMS}. Also, the cohomology complex of Hom-Lie superalgebras was furnished and the central extensions was computed. As application, the derivations and the second cohomology group of a twisted $osp(1,2)$ superalgebra were calculated. Moreover,  Curtright and Zachos introduced the $q-$ deformed Witt algebra \cite{CZ} and from this results  Ding et {\it al } determined a nontrivial $q-$ deformed Witt $n-$ algebras. It is a generalization of the Lie algebra also called sh-n-Lie algebra \cite{D}.
Wang  et {\it al} \cite{WYLWZ}
investigated the two different $q-$ deformed Witt algebra and constructed their $n-$ algebras. In
one case, the super version is also presented. Moreover the central
extensions is provided and the super $q-$ deformed Virasoro $n-$ algebra for the $n$ even case is furnished.

The two parameters deformation of the Virasoro algebra with conformal dimension was studied in \cite{CJ}. Also, the central charge term for the Virasoro algebra and the associated deformed nonlinear equation (Korteweg-de Vries equation) were determined.

Moreover, the  generalizations of $(p,q)$- deformed Heisenberg algebras,  called $\mathcal{R}(p,q)$- deformed quantum algebras were  investigated in\cite{HB1}.
Hounkonnou and Melong\cite{HM}  constructed the $\mathcal{R}(p,q)$- deformed conformal Virasoro algebra,  derived the $\mathcal{R}(p,q)$- deformed  Korteweg- de Vries equation for a conformal dimension $\Delta=1$, and presented the energy-momentum tensor from the $\mathcal{R}(p,q)$- deformed quantum algebras   for the conformal dimension $\Delta=2$.

Recently, the  generalizations   of  Witt and Virasoro algebras were performed, and  the associated Korteweg-de Vries equations from the $\mathcal{R}(p,q)-$ deformed quantum  algebras  were derived. Related relevant properties were investigated and discussed. Furthermore,  the $\mathcal{R}(p,q)-$ deformed Witt $n-$ algebra constructed, and  the Virasoro constraints for a toy model, which play an important role in the study of matrix models was presented \cite{HMM}.

The aim of this paper is to construct the super Witt $n-$ algebra, Virasoro $2n-$ algebra, and  super Virasoro $n-$ algebra ($n$ even ) from the quantum deformed algebra \cite{HB1}. As application, we construct another super $\mathcal{R}(p,q)-$ deformed Witt $n-$ algebra and investigate a toy model for the super $\mathcal{R}(p,q)-$ Virasoro constraints. Furthermore, we deduce particular cases associated to quantum algebra presented in the literature.

This paper is organized as follows: Section $2$ is reserved to some notations, definitions and results used in the sequel. In section $3,$ we investigate the super Witt algebra and super Witt $n-$ algebra induced by the $\mathcal{R}(p,q)-$ deformed quantum algebra. Moreover, we  construct the  $\mathcal{R}(p,q)-$ deformed Virasoro $2n-$ algebra and deduce particular cases. In section $4,$ we furnished the super ${\mathcal R}(p,q)-$ deformed Jacobi identity. Besides, we construct the super $\mathcal{R}(p,q)-$ deformed Virasoro algebra and perform the super $\mathcal{R}(p,q)-$ deformed Virasoro $n-$ algebra. Particular cases are deduced. Section $5$ is dedicated to the application. We contruct another super $\mathcal{R}(p,q)-$ deformed Witt $n-$ algebra and study a toy model. We end with the concluding remarks in section $6.$

\section{Basics definitions and  notations}\label{sec2}

Let us  recall some definitions, notations, and known results  used in this work.
For that, let $ p$ and $q,$ two positive real numbers such that $ 0<q<p\leq 1,$ and a 
meromorphic function $\mathcal{R}$ defined on $\mathbb{C}\times\mathbb{C}$ by \cite{HB}: \begin{eqnarray}\label{r10}
\mathcal{R}(s,t)= \sum_{u,v=-l}^{\infty}r_{uv}s^u\,t^v,
\end{eqnarray}
where $r_{uv}$ are complex numbers, $l\in\mathbb{N}\cup\left\lbrace 0\right\rbrace,$ $\mathcal{R}(p^n,q^n)>0,  \forall n\in\mathbb{N},$ and $\mathcal{R}(1,1)=0$ by definition. The bidisk $\mathbb{D}_{R}$ is defined by:  \begin{eqnarray*}
	\mathbb{D}_{R}
	&=&\left\lbrace a=(a_1,a_2)\in\mathbb{C}^2: |a_j|<R_{j} \right\rbrace,
\end{eqnarray*}
where $R$ is the convergence radius of the series (\ref{r10}) defined by Hadamard formula \cite{TN}:
$$	\lim\sup_{s+t \longrightarrow \infty} \sqrt[s+t]{|r_{st}|R^s_1\,R^t_2}=1.
$$
We also consider $\mathcal{O}(\mathbb{D}_{R})$ the set of holomorphic functions defined on $\mathbb{D}_{R}.$
Define the  $\mathcal{R}(p,q)-$ deformed numbers  \cite{HB}:
\begin{eqnarray}\label{rpqnumber}
[n]_{\mathcal{R}(p,q)}:=\mathcal{R}(p^n,q^n),\quad n\in\mathbb{N}\cup\{0\},
\end{eqnarray}
the
$\mathcal{R}(p,q)-$ deformed factorials
\begin{eqnarray*}\label{s0}
	[n]!_{\mathcal{R}(p,q)}:=\left \{
	\begin{array}{l}
		1\quad\mbox{for}\quad n=0\\
		\\
		\mathcal{R}(p,q)\cdots\mathcal{R}(p^n,q^n)\quad\mbox{for}\quad n\geq 1,
	\end{array}
	\right .
\end{eqnarray*}
and the  $\mathcal{R}(p,q)-$  binomial coefficients
\begin{eqnarray*}\label{bc}
	\bigg[\begin{array}{c} m  \\ n\end{array} \bigg]_{\mathcal{R}(p,q)} := \frac{[m]!_{\mathcal{R}(p,q)}}{[n]!_{\mathcal{R}(p,q)}[m-n]!_{\mathcal{R}(p,q)}},\quad m,n\in\mathbb{N}\cup\{0\},\quad m\geq n.
\end{eqnarray*}
%satisfying the relation
%\begin{eqnarray*}
%	\bigg[\begin{array}{c} m  \\ n\end{array} \bigg]_{\mathcal{R}(p,q)}=\bigg[\begin{array}{c} m  \\ m-n\end{array} \bigg]_{\mathcal{R}(p,q)},\quad m,n\in\mathbb{N}\cup\{0\},\quad m\geq n.
%\end{eqnarray*}
Consider the following linear operators defined on  $\mathcal{O}(\mathbb{D}_{R}),$ (see \cite{HB1} for more details),
\begin{eqnarray*}
	\;Q:\psi\longmapsto Q\psi(z):&=& \psi(qz),\\
	\; P:\psi\longmapsto P\psi(z):&=&\psi(pz),\\
	\;\partial_{p,q}:\psi\longmapsto \partial_{p,q}\psi(z):&=&\frac{\psi(pz)-\psi(qz)}{z(p-q)},
\end{eqnarray*}
and the $\mathcal{R}(p,q)-$ derivative 
\begin{eqnarray*}\label{r5}
	\partial_{\mathcal{R}( p,q)}:=\partial_{p,q}\frac{p-q}{P-Q}\mathcal{R}( P,Q)=\frac{p-q}{p^{P}-q^{Q}}\mathcal{R}(p^{P},q^{Q})\partial_{p,q}.
\end{eqnarray*}
The  algebra associated with the $\mathcal{R}(p,q)-$ deformation is a quantum algebra, denoted $\mathcal{A}_{\mathcal{R}(p,q)},$ generated by the set of operators $\{1, A, A^{\dagger}, N\}$ satisfying the following commutation relations \cite{HB1}:
\begin{eqnarray*}
	&& \label{algN1}
	\quad A A^\dag= [N+1]_{\mathcal {R}(p,q)},\quad\quad\quad A^\dag  A = [N]_{\mathcal {R}(p,q)}.
	\cr&&\left[N,\; A\right] = - A, \qquad\qquad\quad \left[N,\;A^\dag\right] = A^\dag
\end{eqnarray*}
with the realization on  ${\mathcal O}(\mathbb{D}_R)$ given by:
\begin{eqnarray*}\label{algNa}
	A^{\dagger} := z,\qquad A:=\partial_{\mathcal {R}(p,q)}, \qquad N:= z\partial_z,
\end{eqnarray*} 
where $\partial_z:=\frac{\partial}{\partial z}$ is the  derivative on $\mathbb{C}.$

The super multibracket of order $n$ is defined as \cite{HW}:
\begin{small}
	\begin{eqnarray}\label{smb}
	\Big[A_1,A_2,\cdots,A_n\Big]:=\epsilon^{i_1i_2\cdots i_n}_{12\cdots n}\big(-1\big)^{\sum_{k=1}^{n-1}|A_k|(\sum_{l=k+1,i_l< i_k}^{n}|A_{i_l}|)}\,A_{i_1}\,A_{i_2}\cdots A_{i_n},
	\end{eqnarray}
\end{small}
where the symbol $|A|$ is to be understood as the parity of $A$ and  $\epsilon^{i_1 \cdots i_n}_{1 \cdots n}$ is the L\'evi-Civit\'a symbol defined by:
\begin{eqnarray}\label{LCs}
\epsilon^{j_1 \cdots j_p}_{i_1 \cdots i_p}:= det\left( \begin{array} {ccc}
\delta^{j_1}_{i_1} &\cdots&  \delta^{j_1}_{i_p}   \\ 
\vdots && \vdots \\
\delta^{j_p}_{i_1} & \cdots& \delta^{j_p}_{i_p}
\end {array} \right) .
\end{eqnarray}
Moreover, the $q-$ deformed  generalized Jacobi identity is given by \cite{AP,AI}:
\begin{eqnarray*}
	\epsilon^{i_{1},\ldots,i_{2n-1}}_{n_{1},\ldots,n_{2n-1}}\big[\big[l_{i_1},\ldots,l_{i_{2n-1}}\big]_{q},l_{i_{n+1}},\ldots, l_{i_{2n-1}}\big]_q=0.
\end{eqnarray*}
\section{Super $\mathcal{R}(p,q)-$ deformed Witt $n-$ algebra}\label{sec3}
In this section, we contruct the super Witt algebra and the super Witt $n-$ algebra from the $\mathcal{R}(p,q)-$ deformed quantum algebra.

Let $\mathcal{B}={\mathcal B}_0\oplus \mathcal{B}_1$ be  the super-commutative associative superalgebra such that ${\mathcal B}_0=\mathbb{C}\big[z,z^{-1}\big]$ and $\mathcal{B}_1=\theta\,\mathcal{B}_0,$ where $\theta$ is the Grassman variable with $\theta^2=0$ \cite{WYLWZ}:

We define the algebra endomorphism $\sigma$ on $\mathcal{B}$ as follows:
\begin{eqnarray}
\sigma(t^n):=\big(\phi(p,q)\big)^n\,t^n\quad\mbox{and}\quad \sigma(\theta):=\phi(p,q)\theta,
\end{eqnarray}
where  $\phi(p,q)$ is a function  depending on the parameters $p$ and $q$ such that $\phi(p,q)\longrightarrow 1$ as $(p,q)\longrightarrow (1,1).$

We define also the two linear maps by:
\begin{eqnarray*}
	\left \{
	\begin{array}{l}
		\partial_t(t^n):=[n]_{{\mathcal R}(p,q)}\,t^n\mbox{,}\quad \partial_t(\theta\,t^n):=[n]_{{\mathcal R}(p,q)}\,\theta\,t^n, \\
		\\
		\partial_{\theta}(t^n):=0\mbox{,}\quad \partial_{\theta}(\theta\,t^n):=\big(\phi(p,q)\big)^n\,t^n.
	\end{array}
	\right .
\end{eqnarray*}
%\begin{remark}
%	When $p \longrightarrow 1$, $\mathcal{R}(x,1)=(q-1)^{-1}(x-1)$ and	taking $\phi(q)=q,$  the result given in \cite{WYLWZ} is obtained as:  
%	\begin{eqnarray*}
%		\sigma(t^n)=q^n\,t^n\quad\mbox{and}\quad \sigma(\theta)=q\theta
%	\end{eqnarray*}
%	and 
%	\begin{eqnarray*}
%		\left \{
%		\begin{array}{l}
%			\partial_t(t^n)=[n]_{q}\,t^n\mbox{,}\quad \partial_t(\theta\,t^n)=[n]_{q}\,\theta\,t^n, \\
%			\\
%			\partial_{\theta}(t^n)=0\mbox{,}\quad \partial_{\theta}(\theta\,t^n)=q^n\,t^n.
%		\end{array}
%		\right .
%	\end{eqnarray*}
%\end{remark}
%A $\sigma$-derivation $D_{i}$ on $ \mathcal{B}$ is an endomorphism satisfying:
%$$D_{i}(ab)=D_{i}(a)b+(-1)^{i|a|}\sigma (a)D_{i}(b), $$
%where $i\in \mathbb{Z}_{2},$ $ a, b\ \in \mathcal{B}$ are homogeneous element and $|a|$ is the parity of a.
%
%A $\sigma$-derivation $D_0$ is called even $\sigma$-derivation  and $D_1$ is called odd $\sigma$-derivation.
\begin{lemma}
	The linear map $\Delta=\partial_{t}+\theta \partial_{\theta}$ on  ${\mathcal B}$ is an even $\sigma$-derivation. Then:
	\begin{eqnarray}\label{deltaxy}
	\,\Delta(x\,y)&=&\Delta(x)\,y+\sigma(x)\Delta(y),\nonumber\\
	\,\Delta(t^n)&=& [n]_{{\mathcal R}(p,q)}\,t^n\quad\mbox{and}\quad \Delta(\theta\,t^n)= \big([n]_{{\mathcal R}(p,q)} + \big(\phi(p,q)\big)^n\big)\,\theta\,t^n. 
	\end{eqnarray}
\end{lemma}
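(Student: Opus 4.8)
The plan is to verify the two asserted properties on a linear basis of $\mathcal{B}$ and then extend by bilinearity. The natural basis is $\{t^m\}_{m\in\Z}$ for the even part $\mathcal{B}_0$ and $\{\theta\,t^m\}_{m\in\Z}$ for the odd part $\mathcal{B}_1$. First I would record the action of $\Delta=\partial_t+\theta\,\partial_\theta$ on this basis straight from the definitions of $\partial_t$ and $\partial_\theta$: since $\partial_\theta(t^n)=0$ one gets $\Delta(t^n)=[n]_{\mathcal{R}(p,q)}\,t^n$, while $\partial_t(\theta\,t^n)=[n]_{\mathcal{R}(p,q)}\,\theta\,t^n$ and $\partial_\theta(\theta\,t^n)=(\phi(p,q))^n\,t^n$ give $\Delta(\theta\,t^n)=\big([n]_{\mathcal{R}(p,q)}+(\phi(p,q))^n\big)\,\theta\,t^n$. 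This already proves the two closed formulas in the statement, and it shows $\Delta(\mathcal{B}_0)\subseteq\mathcal{B}_0$ and $\Delta(\mathcal{B}_1)\subseteq\mathcal{B}_1$, so $\Delta$ has parity $0$; consequently the twisted Leibniz rule carries no Koszul sign.

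It then remains to check $\Delta(xy)=\Delta(x)\,y+\sigma(x)\,\Delta(y)$ for $x,y$ running over basis elements, which I would organize into four cases by parity. For $x=t^m,\ y=t^n$ the two sides read $[m+n]_{\mathcal{R}(p,q)}\,t^{m+n}$ and $\big([m]_{\mathcal{R}(p,q)}+(\phi(p,q))^m\,[n]_{\mathcal{R}(p,q)}\big)\,t^{m+n}$. For the two mixed cases $x=t^m,\ y=\theta\,t^n$ and $x=\theta\,t^m,\ y=t^n$ I would use that $\theta$ commutes with the even elements $t^k$ together with the multiplicativity $\sigma(\theta\,t^m)=\sigma(\theta)\,\sigma(t^m)=(\phi(p,q))^{m+1}\,\theta\,t^m$. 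The last case $x=\theta\,t^m,\ y=\theta\,t^n$ is immediate: the product $xy$ and each term on the right-hand side contain a factor $\theta^2=0$, so both sides vanish.

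The real content, and the step I expect to be the main obstacle, is that the three nontrivial cases all collapse onto a twisted addition law for the deformed numbers, $[m+n]_{\mathcal{R}(p,q)}=[m]_{\mathcal{R}(p,q)}+(\phi(p,q))^m\,[n]_{\mathcal{R}(p,q)}$, with the odd sector contributing the analogous consistency that pins $[n]_{\mathcal{R}(p,q)}$ down to the form $\tfrac{(\phi(p,q))^n-1}{\phi(p,q)-1}$. This is precisely the compatibility between the twisting function $\phi(p,q)$ and the deformation $\mathcal{R}$, and I would establish it from $[n]_{\mathcal{R}(p,q)}=\mathcal{R}(p^n,q^n)$ together with the prescribed normalisation of $\phi$. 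Once this identity is available, substituting it into the three cases makes the two sides of the Leibniz rule agree term by term, and the proof is complete.
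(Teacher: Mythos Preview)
Your approach is precisely the ``direct computation'' the paper invokes, and your case analysis on basis monomials is the right way to organize it; the closed formulas for $\Delta(t^n)$ and $\Delta(\theta t^n)$ follow immediately from the definitions, as you observe.

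The gap is in your final step. You correctly isolate the twisted addition law
\[
[m+n]_{\mathcal{R}(p,q)}=[m]_{\mathcal{R}(p,q)}+\phi(p,q)^{m}\,[n]_{\mathcal{R}(p,q)}
\]
and the sharper constraint $[n]_{\mathcal{R}(p,q)}=(\phi(p,q)^{n}-1)/(\phi(p,q)-1)$ forced by the case $x=\theta t^m$, $y=t^n$, but you then assert these can be ``established from $[n]_{\mathcal{R}(p,q)}=\mathcal{R}(p^n,q^n)$ together with the prescribed normalisation of $\phi$.'' That cannot be done with the hypotheses actually given: the paper only assumes $\mathcal{R}$ meromorphic with $\mathcal{R}(1,1)=0$ and $\phi(p,q)\to 1$ as $(p,q)\to(1,1)$. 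These conditions do not force $[n]_{\mathcal{R}(p,q)}$ into the geometric-series form. For instance, with the Jagannathan--Srinivasa numbers $[n]_{p,q}=(p^n-q^n)/(p-q)$ and the paper's own choice $\phi(p,q)=pq$, one has $[2]_{p,q}=p+q$ but $[1]_{p,q}+pq\,[1]_{p,q}=1+pq$, so the identity already fails unless $p=1$ or $q=1$. The $\sigma$-derivation property is thus a genuine compatibility \emph{hypothesis} linking $\phi$ to $\mathcal{R}$ (it does hold in the one-parameter case $[n]_q=(q^n-1)/(q-1)$ with $\phi=q$), not a consequence of the stated axioms. Your reduction is correct and considerably sharper than the paper's one-line proof, but the last inference cannot be completed as written; at best the identity should be recorded as a standing assumption on the pair $(\mathcal{R},\phi)$.
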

\begin{proof}
	By direct computation.\cqfd
\end{proof}
%\begin{remark}
		Taking $\mathcal{R}_{x1}=(q-1)^{-1}(x-1)$ and $\phi(q)=q,$ we obtained the result given in \cite{WYLWZ}. 
%	\begin{eqnarray*}
%		\sigma(t^n)=q^n\,t^n\quad\mbox{and}\quad \sigma(\theta)=q\theta
%	\end{eqnarray*}
%	\begin{eqnarray*}
%		%\left \{
%		%\begin{array}{l}
%			\,\partial_t(t^n)&=&[n]_{q}\,t^n\mbox{,}\quad \partial_t(\theta\,t^n)=[n]_{q}\,\theta\,t^n, \nonumber\\
%		\,	\partial_{\theta}(t^n)&=&0\mbox{,}\quad \partial_{\theta}(\theta\,t^n)=q^n\,t^n,\nonumber\\
%		%\end{array}
%		%\right .
%	%\end{eqnarray*}
%	%Taking $\mathcal{R}(x,1)=(q-1)^{-1}(x-1)$ and $\phi(q)=q,$ we obtain\cite{WYLWZ}:
%	%\begin{eqnarray*}
%		\,\Delta(x\,y)&=&\Delta(x)\,y+\sigma(x)\Delta(y),\nonumber\\
%		\,\Delta(t^n)&=& [n]_{q}\,t^n\quad\mbox{and}\quad \Delta(\theta\,t^n)= [n+1]_{q}\,\theta\,t^n. 
%	\end{eqnarray*}
%\end{remark}

The  super $\mathcal{R}(p,q)-$ deformed Witt algebra is generated by bosonic and fermionic operators $l^{{\mathcal{R}(p,q)}}_m=-t^m\,\Delta$ of parity $0$ and $G^{{\mathcal{R}(p,q)}}_m=-\theta\,t^m\,\Delta$ of parity $1.$
\begin{proposition}
	The operators $l^{{\mathcal{R}(p,q)}}_m$ and $G^{{\mathcal{R}(p,q)}}_m$ satisfy the following relations:
	\begin{eqnarray}
	\,\big[l^{{\mathcal{R}(p,q)}}_{m_1},l^{{\mathcal{R}(p,q)}}_{m_2}\big]_{\hat{x}, \hat{y}}&=&\big([m_1]_{\mathcal{R}(p,q)}-[m_2]_{\mathcal{R}(p,q)}\big)\,l^{{\mathcal{R}(p,q)}}_{m_1+m_2},\label{crochet1}\\
	\,\big[l^{{\mathcal{R}(p,q)}}_{m_1}, G^{{\mathcal{R}(p,q)}}_{m_2}\big]_{x,y}&=&\big([m_1]_{\mathcal{R}(p,q)}-[m_2+1]_{\mathcal{R}(p,q)}\big)\,G^{{\mathcal{R}(p,q)}}_{m_1+m_2}\label{crochet2},\\
	\,\big[G^{{\mathcal{R}(p,q)}}_{m_1},G^{{\mathcal{R}(p,q)}}_{m_2}\big]&=&0,\label{crochet3}
	\end{eqnarray}
	
	where\begin{eqnarray}\label{coefcom}
	\left \{
	\begin{array}{l}
	\hat{x}=\chi_{m_1m_2}(p,q)\mbox{,}\quad \hat{y}=(\phi(p,q))^{m_2-m_1}\,\chi_{m_1m_2}(p,q), \\
	\\
	x=\tau_{m_1m_2}(p,q)\mbox{,}\quad y=(\phi(p,q))^{1+m_2-m_1}\,\tau_{m_1m_2}(p,q), \\
	\\
	\chi_{m_1m_2}(p,q)={[m_1]_{\mathcal{R}(p,q)}-[m_2]_{\mathcal{R}(p,q)}\over (\phi(p,q))^{m_2-m_1}\,[m_1]_{{\mathcal R}(p,q)}-[m_2]_{{\mathcal R}(p,q)}}\\
	\\
	\tau_{m_1m_2}(p,q)={[m_1]_{\mathcal{R}(p,q)}-[m_2+1]_{\mathcal{R}(p,q)}\over (\phi(p,q))^{1+m_2-m_1}\,[m_1]_{\mathcal{R}(p,q)}-[m_2]_{\mathcal{R}(p,q)}-(\phi(p,q))^{m_2}}.
	\end{array}
	\right .
	\end{eqnarray}
\end{proposition}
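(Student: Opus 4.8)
The plan is to let every operator act on a general element of $\mathcal{B}$ and read off the structure constants, using throughout that $\Delta$ is an even $\sigma$-derivation. The guiding observation is that composing any two of the generators always produces exactly two terms, one proportional to $\Delta$ and one to $\Delta^2$; the two-parameter bracket $[A,B]_{x,y}:=x\,AB-y\,BA$ is engineered so that the $\Delta^2$-contributions cancel (this fixes the ratio $y/x$), after which matching the surviving $\Delta$-term to the prescribed right-hand side fixes the overall scale.

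First I would compute the composition $l^{\mathcal{R}(p,q)}_{m_1}l^{\mathcal{R}(p,q)}_{m_2}$. Writing $l^{\mathcal{R}(p,q)}_m=-t^m\Delta$ and applying $\Delta(xy)=\Delta(x)y+\sigma(x)\Delta(y)$ together with $\Delta(t^n)=[n]_{\mathcal{R}(p,q)}t^n$ and $\sigma(t^n)=(\phi(p,q))^n t^n$, one gets the operator identity
\begin{eqnarray*}
l^{\mathcal{R}(p,q)}_{m_1}l^{\mathcal{R}(p,q)}_{m_2}=[m_2]_{\mathcal{R}(p,q)}\,t^{m_1+m_2}\Delta+(\phi(p,q))^{m_2}\,t^{m_1+m_2}\Delta^2,
\end{eqnarray*}
and $l^{\mathcal{R}(p,q)}_{m_2}l^{\mathcal{R}(p,q)}_{m_1}$ by exchanging the indices. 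Forming $\hat{x}\,l^{\mathcal{R}(p,q)}_{m_1}l^{\mathcal{R}(p,q)}_{m_2}-\hat{y}\,l^{\mathcal{R}(p,q)}_{m_2}l^{\mathcal{R}(p,q)}_{m_1}$, the $\Delta^2$-term vanishes if and only if $\hat{y}=(\phi(p,q))^{m_2-m_1}\hat{x}$, which is exactly the $\hat{x}$–$\hat{y}$ relation in (\ref{coefcom}); substituting this back into the $\Delta$-term and demanding it equal $\big([m_1]_{\mathcal{R}(p,q)}-[m_2]_{\mathcal{R}(p,q)}\big)l^{\mathcal{R}(p,q)}_{m_1+m_2}$ forces $\hat{x}=\chi_{m_1m_2}(p,q)$, proving (\ref{crochet1}).

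The relation (\ref{crochet2}) follows by the same mechanism with $G^{\mathcal{R}(p,q)}_m=-\theta\,t^m\Delta$, where I would use the second formula of the Lemma, $\Delta(\theta\,t^n)=\big([n]_{\mathcal{R}(p,q)}+(\phi(p,q))^n\big)\theta\,t^n$, and $\sigma(\theta\,t^n)=(\phi(p,q))^{n+1}\theta\,t^n$. The composition $l^{\mathcal{R}(p,q)}_{m_1}G^{\mathcal{R}(p,q)}_{m_2}$ again splits into a $\Delta$- and a $\Delta^2$-part; cancellation of the latter in $x\,l^{\mathcal{R}(p,q)}_{m_1}G^{\mathcal{R}(p,q)}_{m_2}-y\,G^{\mathcal{R}(p,q)}_{m_2}l^{\mathcal{R}(p,q)}_{m_1}$ forces $y=(\phi(p,q))^{1+m_2-m_1}x$, and matching the $\Delta$-term to $\big([m_1]_{\mathcal{R}(p,q)}-[m_2+1]_{\mathcal{R}(p,q)}\big)G^{\mathcal{R}(p,q)}_{m_1+m_2}$ then yields $x=\tau_{m_1m_2}(p,q)$. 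Finally, (\ref{crochet3}) is immediate: since $t$ is even it commutes with $\theta$, so in either product $G^{\mathcal{R}(p,q)}_{m_1}G^{\mathcal{R}(p,q)}_{m_2}$ or $G^{\mathcal{R}(p,q)}_{m_2}G^{\mathcal{R}(p,q)}_{m_1}$ the two Grassmann factors collide as $\theta^2=0$, making each composition vanish identically, whence the bracket is zero irrespective of its coefficients.

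The only genuine care required is the bookkeeping of the $\phi(p,q)$-powers produced by $\sigma$ when $\Delta$ acts on the products $t^m\Delta(\cdot)$ and $\theta\,t^m\Delta(\cdot)$, and tracking the shifts $[m_2]_{\mathcal{R}(p,q)}\mapsto[m_2+1]_{\mathcal{R}(p,q)}$ and $(\phi(p,q))^{m_2}\mapsto(\phi(p,q))^{m_2+1}$ that the fermionic factor $\theta$ introduces in (\ref{crochet2}). The awkward-looking denominators of $\chi_{m_1m_2}$ and $\tau_{m_1m_2}$ are nothing other than the $\Delta^2$-cancellation conditions rewritten, so no obstruction arises beyond this routine algebra.
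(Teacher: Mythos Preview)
Your proposal is correct and follows essentially the same approach as the paper's own proof: both compute the compositions $l_{m_1}l_{m_2}$ and $l_{m_1}G_{m_2}$ via the $\sigma$-derivation rule, separate the result into a $\Delta$-part and a $\Delta^2$-part, impose cancellation of the $\Delta^2$-term to fix the ratio $\hat{y}/\hat{x}$ (resp.\ $y/x$), and then match the remaining $\Delta$-term to determine the overall scale $\chi_{m_1m_2}$ (resp.\ $\tau_{m_1m_2}$). Your treatment is in fact a bit more explicit than the paper's, since you spell out the $\theta^2=0$ argument for (\ref{crochet3}) and articulate upfront the two-term structure that drives the whole computation.
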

\begin{proof}
	From the definition of the deformed commutators, we get: 
	\begin{eqnarray}\label{com1}
	\big[l^{{\mathcal{R}(p,q)}}_{m_1},l^{{\mathcal{R}(p,q)}}_{m_2}\big]_{\hat{x}, \hat{y}}=\hat{x}\label{key}\,l^{{\mathcal{R}(p,q)}}_{m_1}l^{{\mathcal{R}(p,q)}}_{m_2}- \hat{y}\,l^{{\mathcal{R}(p,q)}}_{m_2}l^{{\mathcal{R}(p,q)}}_{m_1}.
	\end{eqnarray}
	Thus,
	\begin{eqnarray*}
		\hat{x}\,l^{{\mathcal{R}(p,q)}}_{m_1}.l^{{\mathcal{R}(p,q)}}_{m_2}&=&-t^{m_1}\,\Delta(l^{{\mathcal{R}(p,q)}}_{m_2})\nonumber\\
		&=& -\hat{x}\,[m_2]_{\mathcal{R}(p,q)}\,l^{{\mathcal{R}(p,q)}}_{m_1+m_2}-\hat{x}\,(\phi(p,q))^{m_2}\,l^{{\mathcal{R}(p,q)}}_{m_1+m_2}\,\Delta.
	\end{eqnarray*}
	Similarly, we have:
	\begin{eqnarray*}
		\hat{y}\,l^{{\mathcal{R}(p,q)}}_{m_2}l^{{\mathcal{R}(p,q)}}_{m_1}&=&- \hat{y}\,[m_1]_{\mathcal{R}(p,q)}\,l^{{\mathcal{R}(p,q)}}_{m_1+m_2}-\hat{y}\,(\phi(p,q))^{m_1}\,l^{{\mathcal{R}(p,q)}}_{m_1+m_2}\,\Delta.
	\end{eqnarray*}
	Then, the relation (\ref{com1}) takes the following form:
	\begin{eqnarray*}
		\big[l^{{\mathcal{R}(p,q)}}_{m_1},l^{{\mathcal{R}(p,q)}}_{m_2}\big]_{\hat{x}, \hat{y}}&=&\big(\hat{y}\,[m_1]_{\mathcal{R}(p,q)}-\hat{x}\,[m_2]_{\mathcal{R}(p,q)}\big)\,l^{{\mathcal{R}(p,q)}}_{m_1+m_2}\nonumber\\ &+& \big(\hat{y}\,(\phi(p,q))^{m_1}-\hat{x}\,(\phi(p,q))^{m_2}\big)\,l^{{\mathcal{R}(p,q)}}_{m_1+m_2}\,\Delta.
	\end{eqnarray*}
	We need to get
	\begin{eqnarray*}
		\big[l^{{\mathcal{R}(p,q)}}_{m_1},l^{{\mathcal{R}(p,q)}}_{m_2}\big]_{\hat{x}, \hat{y}}&=&\big([m_1]_{\mathcal{R}(p,q)}-[m_2]_{\mathcal{R}(p,q)}\big)l^{{\mathcal{R}(p,q)}}_{m_1+m_2}.
	\end{eqnarray*}
	Thus, we obtain the system:
	
	\begin{eqnarray*}
		\left \{
		\begin{array}{l}
			\hat{y}\,[m_1]_{\mathcal{R}(p,q)}-\hat{x}\,[m_2]_{\mathcal{R}(p,q)}=[m_1]_{\mathcal{R}(p,q)}-[m_2]_{\mathcal{R}(p,q)}\\
			\\ \hat{y}\,(\phi(p,q))^{m_1}-\hat{x}\,(\phi(p,q))^{m_2}=0.
		\end{array}
		\right .
	\end{eqnarray*}
	and 
	\begin{eqnarray*}
		\hat{x}&=&{[m_1]_{\mathcal{R}(p,q)}-[m_2]_{\mathcal{R}(p,q)}\over (\phi(p,q))^{m_2-m_1}\,[m_1]_{\mathcal{R}(p,q)}-[m_2]_{\mathcal{R}(p,q)}}\nonumber\\
		&:=& \chi_{m_1m_2}(p,q).
	\end{eqnarray*}
	After computation, we get
	\begin{eqnarray*}
		\hat{y}=(\phi(p,q))^{m_2-m_1}\,\chi_{m_1m_2}(p,q).
	\end{eqnarray*}
	Moreover, 
	\begin{small}
		%\begin{eqnarray}
		$$	xl^{{\mathcal{R}(p,q)}}_{m_1}\,G^{{\mathcal{R}(p,q)}}_{m_2}
		=-x\big([m_2]_{\mathcal{R}(p,q)}+(\phi(p,q))^{m_2}\big)G^{{\mathcal{R}(p,q)}}_{m_1+m_2}- x(\phi(p,q))^{m_2+1}G^{{\mathcal{R}(p,q)}}_{m_1+m_2}\Delta$$
		%\end{eqnarray}
	\end{small}
	and 
	\begin{eqnarray*}
		y\,G^{{\mathcal{R}(p,q)}}_{m_2}\,l^{{\mathcal{R}(p,q)}}_{m_1}
		&=& -y\,[m_1]_{\mathcal{R}(p,q)}\,G^{{\mathcal{R}(p,q)}}_{m_1+m_2}-y\,(\phi(p,q))^{m_1}\,G^{{\mathcal{R}(p,q)}}_{m_1+m_2}\,\Delta.
	\end{eqnarray*}
	Thus, we get
	\begin{small}
		\begin{eqnarray*}
			\big[l^{{\mathcal{R}(p,q)}}_{m_1},G^{{\mathcal{R}(p,q)}}_{m_2}\big]_{x, y}&=& \Big(y\,[m_1]_{\mathcal{R}(p,q)}-x\,\big([m_2]_{\mathcal{R}(p,q)}+(\phi(p,q))^{m_2}\big)\Big)G^{{\mathcal{R}(p,q)}}_{m_1+m_2}\nonumber\\
			&+&  \big(y\,(\phi(p,q))^{m_1}-x\,(\phi(p,q))^{m_2+1}\big)\,G^{{\mathcal{R}(p,q)}}_{m_1+m_2}\,\Delta
		\end{eqnarray*}
	\end{small}
	and
	\begin{eqnarray*}
		\left \{
		\begin{array}{l}
			y\,[m_1]_{\mathcal{R}(p,q)}-x\,\big([m_2]_{\mathcal{R}(p,q)}+(\phi(p,q))^{m_2}\big)=[m_1]_{\mathcal{R}(p,q)}-[m_2+1]_{\mathcal{R}(p,q)}\\
			\\ y\,(\phi(p,q))^{m_1}-x(\phi(p,q))^{m_2+1}=0.
		\end{array}
		\right .
	\end{eqnarray*}
	Solving the above system, we obtain:
%	\begin{eqnarray*}
%		\left \{
%		\begin{array}{l}
%			y\,[m_1]_{\mathcal{R}(p,q)}-x\,\big([m_2]_{\mathcal{R}(p,q)}+(\phi(p,q))^{m_2}\big)=[m_1]_{\mathcal{R}(p,q)}-[m_2+1]_{\mathcal{R}(p,q)}\\
%			\\ y=x(\phi(p,q))^{1+m_2-m_1}
%		\end{array}
%		\right .
%	\end{eqnarray*}
	\begin{eqnarray*}
		x&=&{[m_1]_{\mathcal{R}(p,q)}-[m_2+1]_{\mathcal{R}(p,q)}\over (\phi(p,q))^{1+m_2-m_1}\,[m_1]_{\mathcal{R}(p,q)}-[m_2]_{\mathcal{R}(p,q)}-(\phi(p,q))^{m_2}}\nonumber\\
		&:=& \tau_{m_1m_2}(p,q)
	\end{eqnarray*}
	and
	\begin{eqnarray*}
		y=(\phi(p,q))^{1+m_2-m_1}\,\tau_{m_1m_2}(p,q).
	\end{eqnarray*} \cqfd
\end{proof}

Let us now construct the super $\mathcal{R}(p,q)-$ deformed Witt $n-$ algebra. We define the $\mathcal{R}(p,q)-$ deformed $n-$ bracket $(n\geq 3)$  as follows:
\begin{small}
	\begin{eqnarray}\label{rnb1}
	\big[l^{{\mathcal{R}(p,q)}}_{m_1},\cdots, l^{\mathcal{R}(p,q)}_{m_n}\big]&:=&\bigg(\frac{[-2\sum_{l=1}^{n}m_{l}]_{\mathcal{R}(p,q)}}{2[-\sum_{l=1}^{n}m_{l}]_{\mathcal{R}(p,q)}}\bigg)^{\alpha}\epsilon^{i_1i_2\cdots i_n}_{12\cdots n}\nonumber\\&\times&(\phi(p,q))^{\sum_{j=1}^{n}\big(\lfloor \frac{n}{2} \rfloor-j+1\big)m_{i_j}}l^{\mathcal{R}(p,q)}_{m_{i_1}} \ldots
	l^{{\mathcal{R}(p,q)}}_{m_{i_n}},
	\end{eqnarray}
\end{small}
where $\alpha=\frac{1+(-1)^n }{ 2},$  $\lfloor n \rfloor=Max\{m\in\mathbb{Z}\ m\leq n\}$ is the floor function.

Introducing the operator $l^{{\mathcal{R}(p,q)}}_m=-t^m\,\Delta$ into the relation (\ref{rnb1}),  the $\mathcal{R}(p,q)-$ deformed  $n-$ bracket can be reduced in the simpler form as follows:
\begin{small}
	\begin{eqnarray*}
		\big[l^{{\mathcal{R}(p,q)}}_{m_1},l^{{\mathcal{R}(p,q)}}_{m_2},\ldots, l^{{\mathcal{R}(p,q)}}_{m_n}\big]&=&\frac{\big(q-p\big)^{n-1\choose 2}}{ (\phi(p,q))^{\lfloor {n-1\over 2}\rfloor\sum_{l=1}^{n}m_l}} \Bigg({[-2\sum_{l=1}^{n}m_l]_{{\mathcal R}(p,q)}\over 2[-\sum_{l=1}^{n}m_l]_{{\mathcal R}(p,q)}}\Bigg)^{\alpha}\nonumber\\&\times&\prod_{1\leq i < j\leq n}\Big([m_i]_{{\mathcal R}(p,q)}-[m_j]_{{\mathcal R}(p,q)}\Big)l_{\sum_{l=1}^{n}m_l}.
	\end{eqnarray*}
\end{small}
Now, we investigate the  super $\mathcal{R}(p,q)-$ deformed Witt $n-$ algebra. 

From the super multibracket of order $n$ (\ref{smb}), we define another $\mathcal{R}(p,q)-$ deformed $n-$ bracket as follows:
\begin{small}
	\begin{eqnarray}\label{rnb2}
	\big[l^{{\mathcal{R}(p,q)}}_{m_1},l^{{\mathcal{R}(p,q)}}_{m_2},\cdots, G^{{\mathcal{R}(p,q)}}_{m_n}\big]:&=&\bigg({[-2\sum_{l=1}^{n}m_l-1]_{\mathcal{R}(p,q)}\over 2[-\sum_{l=1}^{n}m_l-1]_{\mathcal{R}(p,q)}}\bigg)^{\alpha}\sum_{j=0}^{n-1}(-1)^{n-1+j}\epsilon^{i_1\ldots i_{n-1}}_{12\cdots n-1}\nonumber\\
	&\times&(\phi(p,q))^{\beta}l^{{\mathcal{R}(p,q)}}_{m_{i_1}}\ldots l^{{\mathcal{R}(p,q)}}_{m_{i_j}}
	G^{{\mathcal{R}(p,q)}}_{m_{n}}l^{{\mathcal{R}(p,q)}}_{m_{i_{j+1}}}\cdots l^{{\mathcal{R}(p,q)}}_{m_{i_{n-1}}},\qquad
	\end{eqnarray}
\end{small}
where $\beta=\sum_{k=1}^{j}\big(\lfloor {n\over 2} \rfloor-k+1\big)m_{i_k}+\big(\lfloor  {n\over 2} \rfloor-1\big)\big(m_n+1\big)+\sum_{k=j+1}^{n-1}\big(\lfloor {n\over 2} \rfloor -k\big)m_{i_k}.$

Using  the bosonic and fermionic operators, the ${\mathcal R}(p,q)-$ deformed  $n-$ bracket (\ref{rnb2}) can be rewritten as:
\begin{small}
	\begin{eqnarray*}
		\big[l^{{\mathcal{R}(p,q)}}_{m_1},l^{{\mathcal{R}(p,q)}}_{m_2},\cdots,G^{{\mathcal{R}(p,q)}}_{m_n}\big]&=& {\big(q-p\big)^{{n-1\choose 2}}\over \big(\phi(p,q)\big)^{\lfloor {n-1\over 2}\rfloor \sum_{l=1}^{n}m_l+1}}\bigg({[-2\sum_{l=1}^{n}m_l-1]_{{\mathcal R}(p,q)}\over 2[\sum_{l=1}^{n}m_l-1]_{{\mathcal R}(p,q)}}\bigg)^{\alpha}\nonumber\\&\times&
		\prod_{1\leq i<j\leq n-1} \big([m_i]_{{\mathcal R}(p,q)}- [m_j]_{{\mathcal R}(p,q)}\big)\nonumber\\&\times&\prod_{i=1}^{n-1} \big([m_i]_{{\mathcal R}(p,q)}- [m_n+1]_{{\mathcal R}(p,q)}\big)\,G^{{\mathcal{R}(p,q)}}_{\sum_{l=1}^{n}m_l}.
	\end{eqnarray*}
\end{small}
\begin{proposition}
	The super ${\mathcal R}(p,q)-$ deformed Witt  $n-$ algebras is generated by the operators $l^{{\mathcal R}(p,q)}_m$ and $G^{{\mathcal R}(p,q)}_m$ satisfying the following commutation  relations:
	\begin{small}
		\begin{eqnarray}\label{rcom1}
		\big[l^{{\mathcal{R}(p,q)}}_{m_1},l^{{\mathcal{R}(p,q)}}_{m_2},\ldots, l^{{\mathcal{R}(p,q)}}_{m_n}\big]&=&{\big(q-p\big)^{n-1\choose 2}\over (\phi(p,q))^{\lfloor {n-1\over 2}\rfloor\sum_{l=1}^{n}m_l}} \Big({[-2\sum_{l=1}^{n}m_l]_{{\mathcal R}(p,q)}\over 2[-\sum_{l=1}^{n}m_l]_{{\mathcal R}(p,q)}}\Big)^{\alpha}\nonumber\\&\times&\prod_{1\leq i < j\leq n}\Big([m_i]_{{\mathcal R}(p,q)}-[m_j]_{{\mathcal R}(p,q)}\Big)l_{\sum_{l=1}^{n}m_l}.
		\end{eqnarray}
	\end{small}
	and
	\begin{small}
		\begin{eqnarray}\label{rcom2}
		\big[l^{{\mathcal{R}(p,q)}}_{m_1},l^{{\mathcal{R}(p,q)}}_{m_2},\cdots,G^{{\mathcal{R}(p,q)}}_{m_n}\big]&=& {\big(q-p\big)^{{n-1\choose 2}}\over \big(\phi(p,q)\big)^{\lfloor {n-1\over 2}\rfloor \sum_{l=1}^{n}m_l+1}}\bigg({[-2\sum_{l=1}^{n}m_l-1]_{{\mathcal R}(p,q)}\over 2[\sum_{l=1}^{n}m_l-1]_{{\mathcal R}(p,q)}}\bigg)^{\alpha}\nonumber\\&\times&
		\prod_{1\leq i<j\leq n-1} \big([m_i]_{{\mathcal R}(p,q)}- [m_j]_{{\mathcal R}(p,q)}\big)\nonumber\\&\times&\prod_{i=1}^{n-1} \big([m_i]_{{\mathcal R}(p,q)}- [m_n+1]_{{\mathcal R}(p,q)}\big)\,G^{{\mathcal{R}(p,q)}}_{\sum_{l=1}^{n}m_l}
		\end{eqnarray}
	\end{small}
	and other anti-commutators are zeros.
\end{proposition}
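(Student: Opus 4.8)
The plan is to verify the two closed forms \eqref{rcom1} and \eqref{rcom2}, which are precisely the reductions announced in the two displays immediately preceding the statement; once these are established the final clause about the vanishing anticommutators is immediate. The single computational engine is the Lemma: since $\Delta=\partial_t+\theta\partial_\theta$ is an even $\sigma$-derivation, feeding \eqref{deltaxy} into the definitions $l^{\mathcal{R}(p,q)}_m=-t^m\Delta$ and $G^{\mathcal{R}(p,q)}_m=-\theta t^m\Delta$ yields the operator identities
\begin{eqnarray*}
\Delta\,t^m=[m]_{\mathcal{R}(p,q)}\,t^m+(\phi(p,q))^m\,t^m\,\Delta,\qquad \Delta\,\theta\,t^m=\big([m]_{\mathcal{R}(p,q)}+(\phi(p,q))^m\big)\theta\,t^m+(\phi(p,q))^m\,\theta\,t^m\,\Delta.
\end{eqnarray*}
These let me push every $\Delta$ to the right of any monomial, so that an arbitrary ordered product normal-orders into a sum $\sum_{k\geq 1}c_k\,t^{\sum_l m_l}\Delta^k$ (resp. with a $\theta$ prefactor), the coefficients $c_k$ being built from the scalars $[m]_{\mathcal{R}(p,q)}$ and $(\phi(p,q))^m$.

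For the purely bosonic bracket \eqref{rcom1} I would substitute $l^{\mathcal{R}(p,q)}_m=-t^m\Delta$ into \eqref{rnb1} and normal-order the product $l^{\mathcal{R}(p,q)}_{m_{i_1}}\cdots l^{\mathcal{R}(p,q)}_{m_{i_n}}$ by repeated use of the first identity above. The point is that the permutation-dependent weight $(\phi(p,q))^{\sum_j(\lfloor n/2\rfloor-j+1)m_{i_j}}$ built into \eqref{rnb1} is engineered exactly so that, after contraction with the L\'evi-Civit\'a symbol $\epsilon^{i_1\cdots i_n}_{12\cdots n}$, every contribution carrying $\Delta^k$ with $k\geq 2$ cancels in the antisymmetrized sum, while the surviving first-order ($k=1$) coefficient collapses to the Vandermonde-type product $\prod_{1\leq i<j\leq n}\big([m_i]_{\mathcal{R}(p,q)}-[m_j]_{\mathcal{R}(p,q)}\big)$. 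Tracking the accumulated powers of $(q-p)$ and of $\phi(p,q)$ then produces the overall prefactor $(q-p)^{\binom{n-1}{2}}/(\phi(p,q))^{\lfloor(n-1)/2\rfloor\sum_l m_l}$, and the scalar $\big([-2\sum_l m_l]_{\mathcal{R}(p,q)}/2[-\sum_l m_l]_{\mathcal{R}(p,q)}\big)^\alpha$ with $\alpha=\tfrac{1+(-1)^n}{2}$ rides along untouched, giving \eqref{rcom1}.

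The mixed bracket \eqref{rcom2} is handled the same way, the only new input being the second operator identity above: passing $\Delta$ through the fermionic slot $G^{\mathcal{R}(p,q)}_{m_n}$ produces the shifted weight $[m_n]_{\mathcal{R}(p,q)}+(\phi(p,q))^{m_n}$, which is what converts the factor $[m_n]_{\mathcal{R}(p,q)}$ into $[m_n+1]_{\mathcal{R}(p,q)}$ and accounts for the extra factor $\prod_{i=1}^{n-1}\big([m_i]_{\mathcal{R}(p,q)}-[m_n+1]_{\mathcal{R}(p,q)}\big)$; the insertion sum over $j$ with signs $(-1)^{n-1+j}$ and the weight $\beta$ in \eqref{rnb2} reproduce the antisymmetrization of the super multibracket \eqref{smb} over the $n-1$ bosonic indices, leaving the residual Vandermonde $\prod_{1\leq i<j\leq n-1}\big([m_i]_{\mathcal{R}(p,q)}-[m_j]_{\mathcal{R}(p,q)}\big)$. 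Finally, any bracket containing two or more fermionic generators vanishes because each $G^{\mathcal{R}(p,q)}_m$ carries a factor $\theta$ with $\theta^2=0$, consistently with \eqref{crochet3}, which settles the last clause. I expect the genuine obstacle to be the bosonic cancellation step: proving cleanly that the permutation-weighted, $\epsilon$-antisymmetrized sum annihilates all $\Delta^{\geq 2}$ terms and leaves exactly the Vandermonde is the combinatorial heart of the argument, most naturally carried out by induction on $n$ together with the elementary Vandermonde identity, since the weight exponents in \eqref{rnb1} and \eqref{rnb2} are precisely reverse-engineered to force this collapse.
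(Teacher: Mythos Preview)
Your approach is essentially the paper's own: the paper does not give a separate proof of the proposition, but simply records \eqref{rcom1} and \eqref{rcom2} as the reductions obtained by inserting $l^{\mathcal{R}(p,q)}_m=-t^m\Delta$ and $G^{\mathcal{R}(p,q)}_m=-\theta t^m\Delta$ into the defining brackets \eqref{rnb1} and \eqref{rnb2}, exactly as you outline. Your proposal is in fact more explicit than the paper about the mechanism (normal-ordering via the $\sigma$-derivation rule, cancellation of the $\Delta^{\geq 2}$ pieces under the weighted antisymmetrization, and the resulting Vandermonde product).

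One small slip to fix: in your second operator identity you should have $(\phi(p,q))^{m+1}\theta t^m\Delta$ rather than $(\phi(p,q))^{m}\theta t^m\Delta$, since $\sigma(\theta t^m)=\sigma(\theta)\sigma(t^m)=(\phi(p,q))^{m+1}\theta t^m$. This extra factor of $\phi(p,q)$ is precisely what is absorbed by the $(m_n+1)$ appearing in the exponent $\beta$ of \eqref{rnb2} and in the overall power $\lfloor(n-1)/2\rfloor\sum_l m_l+1$ in \eqref{rcom2}, so correcting it makes your bookkeeping consistent with the stated formula.
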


		Taking $n=3$ in the relations (\ref{rcom1}) and (\ref{rcom2}), we obtain the super ${\mathcal R}(p,q)-$ deformed Witt $3-$ algebra:
		\begin{small}
			\begin{eqnarray*}
				\big[l^{\mathcal{R}(p,q)}_{m_1},l^{{\mathcal{R}(p,q)}}_{m_2}, l^{\mathcal{R}(p,q)}_{m_3}\big]&=&{\big(q-p\big)\over (\phi(p,q))^{ m_1+m_2+m_3}}\big([m_1]_{{\mathcal R}(p,q)}- [m_2]_{{\mathcal R}(p,q)}\big)\nonumber\\&
				\times& \big([m_1]_{{\mathcal R}(p,q)}- [m_3]_{{\mathcal R}(p,q)}\big)\big([m_2]_{{\mathcal R}(p,q)}- [m_3]_{{\mathcal R}(p,q)}\big)l_{m_1+m_2+m_3},
			\end{eqnarray*}
			\begin{eqnarray*}
				\big[l^{\mathcal{R}(p,q)}_{m_1},l^{\mathcal{R}(p,q)}_{m_2},G^{\mathcal{R}(p,q)}_{m_3}\big]&=& {\big(q-p\big)\big([m_1]_{{\mathcal R}(p,q)}- [m_2]_{{\mathcal R}(p,q)}\big)\over 2\big(\phi(p,q)\big)^{ m_1+m_2+m_3+3}}  \big([m_1]_{{\mathcal R}(p,q)}- [m_3+1]_{{\mathcal R}(p,q)}\big)\nonumber\\&
				\times&\big([m_2]_{{\mathcal R}(p,q)}- [m_3+1]_{{\mathcal R}(p,q)}\big)G_{m_1+m_2+m_3}
			\end{eqnarray*}
		\end{small}
		and other anti-commutators are zeros.
%		\item[(ii)] The super $q-$ Witt $3-$ algebra is obtained by taking ${\mathcal R}(x)=(q-1)^{-1}(x-1)$ and $\phi(q)=q:$
%		\begin{small}
%			\begin{eqnarray*}
%				\big[l^{q}_{m_1},l^{q}_{m_2}, l^{q}_{m_3}\big]={(q-1)([m_1]_{q}- [m_3]_{q})\over q^{ m_1+m_2+m_3}}\big([m_1]_{q}- [m_2]_{q}\big) \big([m_2]_{q}- [m_3]_{q}\big)l_{m_1+m_2+m_3},
%			\end{eqnarray*}
%			\begin{eqnarray*}
%				\big[l^{q}_{m_1},l^{q}_{m_2},G^{q}_{m_3}\big]= {(q-1)\big([m_1]_{q}- [m_2]_{q}\big)\over q^{ m_1+m_2+m_3+3}(
%					[m_1]_{q}- [m_3+1]_{q})}([m_2]_{q}- [m_3+1]_{q})G_{m_1+m_2+m_3}.
%			\end{eqnarray*}
%		\end{small}
%	\end{enumerate}
%\end{remark}

Now, we investigate the Virasoro $2n-$ algebra in the framework of the $\mathcal{R}(p,q)-$ deformed quantum algebras. 
The Virasoro algebra
$${\mathcal V}ir=\bigoplus_{n\in{\mathbb Z}}{\mathbb K}L_n \oplus {\mathbb K}\,C$$ is the Lie algebra which satisfies the commutation relations\cite{IK}: 
\begin{eqnarray*}\label{va}
	\left[L_{m}, L_{n} \right]=(m-n)L_{n+m} + \frac{1}{12}m(m-1)(m+1)\delta_{m+n,o}\,C,
\end{eqnarray*}
\begin{eqnarray*}
	\left[\mathcal{V}ir, C\right]=\{0\},
\end{eqnarray*}
where $\delta_{i,j}$ denotes the Kronecker delta and $C$ the central charge.

The $\mathcal{R}(p,q)-$ deformed operators $L_n$ defined as:
\begin{eqnarray*}
	L_n:= -t^n\,\bar{D}_{{\mathcal R}(p,q)}
\end{eqnarray*} 
satisfy  the $\mathcal{R}(p,q)-$ deformed Witt  $n-$ algebra given by (\ref{rcom1}). From the skewsymmetry and the $\mathcal{R}(p,q)-$ deformed generalized Jacobi identity,
we have:
\begin{lemma} 
	The ${\mathcal R}(p,q)-$ deformed Virasoro $2n-$ algebra is generated  by the following relation:
	\begin{eqnarray}\label{V2nalg}
	\big[L_{m_1},\cdots,L_{m_{2n}}\big] =g_{{\mathcal R}(p,q)}(m_1,\cdots,m_{2n}) + C_{{\mathcal R}(p,q)}(m_1,\cdots,m_{2n}),
	\end{eqnarray}
	where
	\begin{small}
		\begin{eqnarray}\label{gv}
		g_{{\mathcal R}(p,q)}(m_1,\cdots,m_{2n})&=&{(q-p)^{{2n-1\choose 2}}\over \big(\phi(p,q)\big)^{(n-1) \sum_{l=1}^{2n}m_l}}\bigg({[-2\sum_{l=1}^{2n}m_l]_{{\mathcal R}(p,q)}\over 2[-\sum_{l=1}^{2n}m_l]_{{\mathcal R}(p,q)}}\bigg)\nonumber\\
		&\times&\prod_{1\leq i< j\leq 2n}\Big([m_i]_{{\mathcal R}(p,q)}-[m_j]_{{\mathcal R}(p,q)}\Big)L_{\sum_{l=1}^{2n}m_l}
		\end{eqnarray}
	\end{small}
	and \begin{eqnarray}\label{cv}
	C_{{\mathcal R}(p,q)}(m_1,\cdots,m_{2n})&=&{c(p,q)\epsilon^{i_1\cdots i_{2n}}_{1\cdots 2n}\over 6\times  2^n\times n!}\prod_{l=1}^{n}{[m_{i_{2l-1}}-1]_{{\mathcal R}(p,q)}\over \big(\phi(p,q)\big)^{m_{2l-1}}}{[m_{2l-1}]_{{\mathcal R}(p,q)}\over [2m_{2l-1}]_{{\mathcal R}(p,q)}}\nonumber\\&\times& [m_{i_{2l-1}}]_{{\mathcal R}(p,q)}[m_{i_{2l-1}}+1]_{{\mathcal R}(p,q)}
	\delta_{m_{i_{2l-1}}+ m_{i_{2l}},0}
	\end{eqnarray}	
	is the $\mathcal{R}(p,q)-$ deformed central extension.
\end{lemma}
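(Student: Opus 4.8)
The plan is to split the $2n$-bracket into its ``non-central'' and ``central'' contributions, matching the two summands of (\ref{V2nalg}): the operator part $g_{\mathcal{R}(p,q)}$ should be nothing but the already-established Witt $2n$-algebra evaluated on the Virasoro generators, while the central part $C_{\mathcal{R}(p,q)}$ should arise from pairing up the $2n$ labels through the cocycle that deforms the classical Virasoro central term. The tools, as suggested by the lead-in, are the skewsymmetry of the multibracket (\ref{smb}) and the $\mathcal{R}(p,q)$-deformed generalized Jacobi identity.

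First I would dispose of the operator part. For even order the exponent $\alpha=\tfrac{1+(-1)^{2n}}{2}$ equals $1$ and $\lfloor\tfrac{2n-1}{2}\rfloor=n-1$, so upon substituting $n\to 2n$ in the Witt $n$-algebra relation (\ref{rcom1}) and replacing $l_{m}$ by $L_{m}=-t^{m}\bar{D}_{\mathcal{R}(p,q)}$, the right-hand side becomes exactly the expression $g_{\mathcal{R}(p,q)}(m_1,\ldots,m_{2n})$ of (\ref{gv}). Since the $L_m$ realize the deformed Witt $n$-algebra, this summand is immediate and needs no new computation beyond reading off the exponents.

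Next I would treat the central piece. The $L_m$ satisfy the deformed Virasoro $2$-bracket, i.e. the Witt commutator augmented by a central charge proportional to $\delta_{m+n,0}$ and governed by the deformation of the classical cocycle $\tfrac{1}{12}m(m-1)(m+1)$; here the factors $[m-1]_{\mathcal{R}(p,q)}$, $[m]_{\mathcal{R}(p,q)}$, $[m+1]_{\mathcal{R}(p,q)}$ together with $\phi(p,q)^{-m}$ and $[m]/[2m]$ are the deformed counterparts of $m(m-1)(m+1)$. I would record this one-pair cocycle as the $n=1$ specialization of (\ref{cv}) and check it reproduces the known deformed Virasoro central term. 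Expanding the fully antisymmetrized $2n$-bracket from (\ref{smb}) and reducing it with the generalized Jacobi identity to products of $2$-brackets, each factor contributes either its operator value $L_{m_a+m_b}$ or its central value $\propto\delta_{m_a+m_b,0}$. The contribution in which every factor is an operator rebuilds $g_{\mathcal{R}(p,q)}$, whereas the contribution in which the $2n$ labels are partitioned into $n$ central pairs $(m_{i_{2l-1}},m_{i_{2l}})$ produces the product over $l=1,\ldots,n$ in (\ref{cv}); the L\'evi--Civit\'a symbol $\epsilon^{i_1\cdots i_{2n}}_{1\cdots 2n}$ enforces total antisymmetry, the factor $\tfrac{1}{2^n n!}$ corrects for overcounting each perfect matching ($2^n$ from swapping labels inside a pair, $n!$ from permuting the pairs), and $\tfrac{c(p,q)}{6}$ is the per-pair normalization inherited from the seed cocycle.

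The hard part will be the combinatorial bookkeeping of the mixed contributions, namely the terms in which some but not all of the $2$-brackets are central. I expect to show these cancel or telescope under the L\'evi--Civit\'a antisymmetrization combined with skewsymmetry, so that only the two ``pure'' contributions survive. Confirming this cancellation, and checking that the surviving central part is itself consistent with the deformed generalized Jacobi identity so that (\ref{V2nalg}) genuinely defines a $2n$-algebra, is the technical crux of the argument.
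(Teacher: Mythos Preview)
Your approach matches the paper's own: the paper offers no proof beyond the single sentence ``From the skewsymmetry and the $\mathcal{R}(p,q)$-deformed generalized Jacobi identity, we have:'' before stating the lemma, and then passes directly to the examples $n=2,3$. Your proposal is therefore already more explicit than the source, and the identification of $g_{\mathcal{R}(p,q)}$ as the $n\to 2n$ specialization of (\ref{rcom1}) together with the combinatorial reading of the $\tfrac{1}{2^n n!}$ factor in (\ref{cv}) is exactly the intended content.

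One caution on the ``hard part'': the paper does not derive (\ref{cv}) by expanding the $2n$-bracket into nested $2$-brackets and arguing that mixed operator/central contributions cancel; rather, the bracket (\ref{rnb1}) is an antisymmetrized \emph{product}, and the central extension (\ref{cv}) is effectively postulated (following \cite{WYLWZ}) and then declared compatible with skewsymmetry and the generalized Jacobi identity. So the cancellation of mixed terms you flag as the technical crux is not something the paper proves, nor is it strictly needed for the lemma as stated there --- the lemma is really a definition-plus-consistency-check rather than a derivation. If you pursue the nested-bracket expansion you may find the mixed terms do not cancel cleanly, which would not contradict the paper but would mean your route diverges from it at that point.
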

\begin{example}
	Some examples are given for $n=2$ and $n=3.$
	\begin{enumerate}
		\item [(a)]
		Taking $n=2$ in the realtions (\ref{V2nalg}), (\ref{gv}), and (\ref{cv}), we obtain the ${\mathcal R}(p,q)-$ deformed Virasoro $4-$ algebra:
		\begin{small}
			\begin{eqnarray*}
				\big[L_{m_1},L_{m_2},L_{m_3},L_{m_{4}}\big]_{{\mathcal R}(p,q)} =g_{{\mathcal R}(p,q)}(m_1,m_2,m_3,m_{4})+C_{{\mathcal R}(p,q)}(m_1,\cdots,m_{4}),
			\end{eqnarray*}
			where
			\begin{eqnarray*}
				g_{{\mathcal R}(p,q)}(m_1,m_2,m_3,m_4) &=& {(q-p)^{3}\over \big(\phi(p,q)\big)^{ m_1+m_2+m_3+m_4}}\bigg({[-2\sum_{l=1}^{4}m_l]_{{\mathcal R}(p,q)}\over 2[-\sum_{l=1}^{4}m_l]_{{\mathcal R}(p,q)}}\bigg)\nonumber\\&\times&\prod_{1\leq i < j\leq 4}\Big([m_i]_{{\mathcal R}(p,q)}-[m_j]_{{\mathcal R}(p,q)}\Big)L_{\sum_{l=1}^{4}m_l}
			\end{eqnarray*}
			and
			\begin{eqnarray*}
				C_{{\mathcal R}(p,q)}(m_1,\cdots,m_{4})&=&{c(p,q)\epsilon^{i_1\cdots i_{4}}_{1\cdots 4}\over 48}\prod_{l=1}^{2}\big(\phi(p,q)\big)^{-m_{2l-1}}{[m_{2l-1}]_{{\mathcal R}(p,q)}\over [2m_{2l-1}]_{{\mathcal R}(p,q)}}\nonumber\\&\times& [m_{i_{2l-1}}-1]_{{\mathcal R}(p,q)}[m_{i_{2l-1}}]_{{\mathcal R}(p,q)}[m_{i_{2l-1}}+1]_{{\mathcal R}(p,q)}
				\delta_{m_{i_{2l-1}}+ m_{i_{2l}},0.}
			\end{eqnarray*}
		\end{small}
		\item [(b)]The ${\mathcal R}(p,q)-$ deformed Virasoro $6-$ algebra is deduced from the generalization by taking $n=3:$
		\begin{eqnarray*}
			\big[L_{m_1},\cdots,L_{m_{6}}\big]_{{\mathcal R}(p,q)} =g_{{\mathcal R}(p,q)}(m_1,\cdots,m_{6})+C_{{\mathcal R}(p,q)}(m_1,\cdots,m_{6}),
		\end{eqnarray*}
		where
		\begin{small}
			\begin{eqnarray*}
				g_{{\mathcal R}(p,q)}(m_1,\cdots,m_{6})&=&{(q-p)^{10}\over \big(\phi(p,q)\big)^{2 \sum_{l=1}^{6}m_l}}\bigg({[-2\sum_{l=1}^{6}m_l]_{{\mathcal R}(p,q)}\over 2[-\sum_{l=1}^{6}m_l]_{{\mathcal R}(p,q)}}\bigg)\nonumber\\&\times&\prod_{1\leq i< j\leq 6}\Big([m_i]_{{\mathcal R}(p,q)}-[m_j]_{{\mathcal R}(p,q)}\Big)L_{\sum_{l=1}^{6}m_l}
			\end{eqnarray*}
			and
			\begin{eqnarray*}
				C_{{\mathcal R}(p,q)}(m_1,\cdots,m_{6})&=&{c(p,q)\epsilon^{i_1\cdots i_6}_{1\cdots 6}\over 288}\prod_{l=1}^{3}\big(\phi(p,q)\big)^{-m_{2l-1}}{[m_{2l-1}]_{{\mathcal R}(p,q)}\over [2m_{2l-1}]_{{\mathcal R}(p,q)}}\nonumber\\&\times& [m_{i_{2l-1}}-1]_{{\mathcal R}(p,q)}[m_{i_{2l-1}}]_{{\mathcal R}(p,q)}[m_{i_{2l-1}}+1]_{{\mathcal R}(p,q)}
				\delta_{m_{i_{2l-1}}+ m_{i_{2l}},0.}
			\end{eqnarray*}
		\end{small}
	\end{enumerate}
\end{example}

\section{Super$\mathcal{R}(p,q)-$ deformed  Virasoro $n-$ algebra}
In this section, we determine the super ${\mathcal R}(p,q)-$ deformed Jacobi identity.
 Furthermore,  we discuss the super $\mathcal{R}(p,q)-$ deformed  Virasoro algebra and derive the super $\mathcal{R}(p,q)-$ deformed Virasoro $n-$ algebra ($n$ even).
\begin{lemma}
	The ${\mathcal R}(p,q)-$ deformed superalgebra (\ref{crochet1}),(\ref{crochet2}), and (\ref{crochet3}) satisfies the  super ${\mathcal R}(p,q)-$ deformed Jacobi identity:
	\begin{eqnarray}
	\sum_{(i,j,l)\in\mathcal{C}(n,m,k)}\,(-1)^{|A_i||A_l|}\Big[\rho(A_i),\Big[A_j,A_l\Big]_{{\mathcal R}(p,q)}\Big]_{{\mathcal R}(p,q)}=0,
	\end{eqnarray} 
	where  $\rho(l^{{\mathcal R}(p,q)}_{m})={[2\,m]_{\mathcal{R}(p,q)}\over [m]_{\mathcal{R}(p,q)}}l^{{\mathcal R}(p,q)}_{m},$ $\rho(G^{{\mathcal R}(p,q)}_{m})={[2(m+1)]_{\mathcal{R}(p,q)}\over [m+1]_{\mathcal{R}(p,q)}}G^{{\mathcal R}(p,q)}_{m} $ and $\mathcal{C}(n,m,k)$ denotes the cyclic permutation of $(n,m,k)$. 
\end{lemma}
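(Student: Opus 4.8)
The plan is to read this as the $\sigma$-twisted graded Jacobi identity of the Hom-Lie superalgebra attached to the even $\sigma$-derivation $\Delta$ of (\ref{deltaxy}), and to prove it by exhausting the parity types of the triple $(A_i,A_j,A_l)$. Since $\rho$ preserves parity---it merely rescales $l^{\mathcal{R}(p,q)}_m$ and $G^{\mathcal{R}(p,q)}_m$ by scalars---and the defining relations (\ref{crochet1})--(\ref{crochet3}) are graded, each cyclic term lands in a single homogeneous component, so the three cases (three bosons; two bosons and one fermion; at least two fermions) can be treated separately.

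First I would dispose of the fermion-heavy case. If at least two of the three generators are of type $G^{\mathcal{R}(p,q)}$, then in every cyclic term either the inner bracket $[A_j,A_l]_{\mathcal{R}(p,q)}$ or the outer bracket is a bracket of two fermionic operators, which vanishes by (\ref{crochet3}); hence all three terms vanish termwise and the identity is trivial. This uses only that brackets of two $G^{\mathcal{R}(p,q)}$'s are zero.

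The substance lies in the other two cases. For three bosonic generators $l_a,l_b,l_c$ all signs $(-1)^{|A_i||A_l|}$ equal $1$, and applying (\ref{crochet1}) twice makes each cyclic term a scalar multiple of $l_{a+b+c}$; after inserting $\rho(l_m)=\tfrac{[2m]}{[m]}l_m$ the claim reduces to showing that $\sum_{\mathrm{cyc}}\tfrac{[2a]}{[a]}\big([b]-[c]\big)\big([a]-[b+c]\big)$ vanishes, where $[\,\cdot\,]=[\,\cdot\,]_{\mathcal{R}(p,q)}$. For the mixed type $l_a,l_b,G_c$ I would use graded skew-symmetry $[G,l]_{\mathcal{R}(p,q)}=-[l,G]_{\mathcal{R}(p,q)}$ together with (\ref{crochet1}) and (\ref{crochet2}); every term becomes a multiple of $G_{a+b+c}$, and after inserting $\rho(G_m)=\tfrac{[2(m+1)]}{[m+1]}G_m$ the statement collapses to the companion scalar identity $\tfrac{[2a]}{[a]}([b]-[c+1])([a]-[b+c+1])-\tfrac{[2b]}{[b]}([a]-[c+1])([b]-[a+c+1])-\tfrac{[2(c+1)]}{[c+1]}([a]-[b])([a+b]-[c+1])=0$.

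The main obstacle is precisely these two scalar cancellations: they are \emph{false} for an arbitrary sequence of deformed numbers and hold only because $\rho$ has been tuned to the relations (\ref{crochet1})--(\ref{crochet3}). To settle them I would clear the ratios $\tfrac{[2m]}{[m]}$ and rewrite each difference $[x]-[y]$ through the structural identities of the $\mathcal{R}(p,q)$-numbers (the $\mathcal{R}(p,q)$ analogues of $[x]-[y]=(\text{factor})\,[x-y]$ and of $[2m]=[m]\,(\text{factor})$), so that each cyclic sum becomes a sum of monomials in the deformation data which cancel in triples. This is exactly the computation that fixes the coefficients $\tfrac{[2m]}{[m]}$ and $\tfrac{[2(m+1)]}{[m+1]}$ in the definition of $\rho$, and it is the step where the specific form of the $\mathcal{R}(p,q)$-deformation---rather than a generic graded associative algebra---is genuinely needed. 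The expected byproduct is that the bracket together with $\rho$ realizes a Hom-Lie superalgebra, the stated relation being its Hom-Jacobi axiom.
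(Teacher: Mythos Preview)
Your outline is precisely the direct computation the paper invokes; the paper's own proof says only ``Taking respectively $A_i=l^{\mathcal{R}(p,q)}_n$, $A_j=l^{\mathcal{R}(p,q)}_m$, $A_l=l^{\mathcal{R}(p,q)}_k$, and by computation, the result follows,'' so your parity-by-parity breakdown is already more detailed than what the paper supplies. One simplification worth recording: after the substitution $d=c+1$ your mixed scalar identity becomes literally the bosonic one with indices $(a,b,d)$, so only the bosonic cancellation
\[
\sum_{\mathrm{cyc}}\frac{[2a]}{[a]}\big([b]-[c]\big)\big([a]-[b+c]\big)=0
\]
needs independent verification.

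One caution on that last step. The ``structural identities'' you plan to use---analogues of $[x]-[y]=(\text{factor})\,[x-y]$ and $[2m]=[m]\cdot(\text{factor})$---exist only when the deformed numbers have the exponential shape $[n]=(\tau_1^{\,n}-\tau_2^{\,n})/(\tau_1-\tau_2)$ that the paper tacitly adopts later (in the toy-model section), not for an arbitrary meromorphic $\mathcal{R}$ as in Section~\ref{sec2}; for a generic sequence $[n]$ the cyclic sum does not vanish. So either make that hypothesis on $\mathcal{R}$ explicit, or carry out the cancellation directly in the variables $\tau_1^{m_i},\tau_2^{m_i}$, where it is a polynomial identity (each monomial cancels against one from another cyclic term). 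With that proviso your plan is complete and coincides with the paper's approach.
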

\begin{proof}
Taking respectively, $A_i=l^{\mathcal{R}(p,q)}_n,$ $A_j=l^{\mathcal{R}(p,q)}_m,$ $A_l=l^{\mathcal{R}(p,q)}_k,$ and by computation, 
the result follows. \cqfd
	\end{proof}

The  super $\mathcal{R}(p,q)-$ deformed Virasoro algebra is generated by bosonic and fermionic operators $\bar{l}^{{\mathcal{R}(p,q)}}_m=-t^m\,\Delta$ of parity $0$ and $\bar{G}^{{\mathcal{R}(p,q)}}_m=-\theta\,t^m\,\Delta$ of parity $1.$
\begin{proposition}
	The operators $\bar{l}^{{\mathcal{R}(p,q)}}$ and $\bar{G}^{{\mathcal{R}(p,q)}}_m$ satisfy  the following commutation relations:
	\begin{small}
		\begin{eqnarray}\label{gsva1}
		\big[\bar{l}^{{\mathcal{R}(p,q)}}_{m_1},\bar{l}^{{\mathcal{R}(p,q)}}_{m_2}\big]_{\hat{x}, \hat{y}}=\big([m_1]_{\mathcal{R}(p,q)}-[m_2]_{\mathcal{R}(p,q)}\big)\bar{l}^{\mathcal{R}(p,q)}_{m_1+m_2}+ C_{\mathcal{R}(p,q)}(m_1)\delta_{m_1+m_2,0}, \end{eqnarray}
		and
		\begin{eqnarray}\label{gsva2}
		\big[\bar{l}^{\mathcal{R}(p,q)}_{m_1}, \bar{G}^{{\mathcal{R}(p,q)}}_{m_2}\big]_{x,y}=\big([m_1]_{\mathcal{R}(p,q)}-[m_2+1]_{\mathcal{R}(p,q)}\big)\bar{G}^{{\mathcal{R}(p,q)}}_{m_1+m_2} + C_{\mathcal{R}(p,q)}(m_1)\delta_{m_1+m_2+1,0}, 
		\end{eqnarray}
	\end{small}
	where 
	$\hat{x},$ $\hat{y},$ $x,$ $y$ are given by the relation (\ref{coefcom}), \begin{eqnarray*}
		C_{\mathcal{R}(p,q)}(m_1)={c(p,q)(\phi(p,q))^{m_1}\,[m_1]_{{\mathcal R}(p,q)}\over 6[2m_1]_{{\mathcal R}(p,q)}} [m_1+1]_{{\mathcal R}(p,q)}[m_1]_{\mathcal{R}(p,q)}[m_1-1]_{{\mathcal R}(p,q)}
	\end{eqnarray*}
	is the $\mathcal{R}(p,q)-$ deformed central extension and other anti-commutators are zeros.
\end{proposition}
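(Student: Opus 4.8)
The plan is to realize the super $\mathcal{R}(p,q)$-deformed Virasoro algebra as a central extension of the super $\mathcal{R}(p,q)$-deformed Witt algebra established in the preceding Proposition. Since $\bar{l}^{\mathcal{R}(p,q)}_m$ and $\bar{G}^{\mathcal{R}(p,q)}_m$ are defined by exactly the same operators $-t^m\,\Delta$ and $-\theta\,t^m\,\Delta$ as $l^{\mathcal{R}(p,q)}_m$ and $G^{\mathcal{R}(p,q)}_m$, the centerless part of (\ref{gsva1}) and (\ref{gsva2}) coincides verbatim with the relations (\ref{crochet1}) and (\ref{crochet2}). Thus the only genuinely new content is the central term, and the whole proof reduces to determining a two-cocycle compatible with the twisted brackets $[\cdot,\cdot]_{\hat{x},\hat{y}}$ and $[\cdot,\cdot]_{x,y}$.

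First I would recompute the products $\bar{l}^{\mathcal{R}(p,q)}_{m_1}\bar{l}^{\mathcal{R}(p,q)}_{m_2}$ and $\bar{G}^{\mathcal{R}(p,q)}_{m_2}\bar{l}^{\mathcal{R}(p,q)}_{m_1}$ exactly as in the proof of the earlier Proposition, using the $\sigma$-derivation property of $\Delta$ from the Lemma, so as to reproduce the non-central right-hand sides together with the surviving coefficients $\hat{x},\hat{y}$ (resp. $x,y$) recorded in (\ref{coefcom}). This fixes the linear part. I would then append an undetermined central function $C_{\mathcal{R}(p,q)}(m_1)\,\delta_{m_1+m_2,0}$ in the bosonic sector and $C_{\mathcal{R}(p,q)}(m_1)\,\delta_{m_1+m_2+1,0}$ in the mixed sector, the shift by one in the latter delta being dictated by the effective weight $m+1$ carried by $\theta\,t^m$, visible already in the $[m_2+1]_{\mathcal{R}(p,q)}$ of (\ref{crochet2}).

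To pin down the explicit form of $C_{\mathcal{R}(p,q)}(m_1)$, I would substitute this ansatz into the super $\mathcal{R}(p,q)$-deformed Jacobi identity furnished by the preceding Lemma, taking $A_i,A_j,A_l$ among the $\bar{l}$'s, and collect the resulting recursion on $C_{\mathcal{R}(p,q)}$. The cubic factor $[m_1-1]_{\mathcal{R}(p,q)}[m_1]_{\mathcal{R}(p,q)}[m_1+1]_{\mathcal{R}(p,q)}$ is forced as the $\mathcal{R}(p,q)$-analogue of $m(m-1)(m+1)$, while the prefactor $\tfrac{(\phi(p,q))^{m_1}[m_1]_{\mathcal{R}(p,q)}}{6\,[2m_1]_{\mathcal{R}(p,q)}}$ is the unique normalization that respects the twist $\phi(p,q)$ transported by $\Delta$ and collapses, in the limit $(p,q)\to(1,1)$ with $\phi(p,q)\to 1$, to the classical factor $\tfrac{1}{12}$, since $[m_1]_{\mathcal{R}(p,q)}/[2m_1]_{\mathcal{R}(p,q)}\to\tfrac12$. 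Matching this limit against the undeformed Virasoro relation stated just before the Lemma fixes the overall normalization $c(p,q)$.

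The main obstacle is the two-cocycle verification itself. Because the brackets here are twisted commutators $[A,B]_{x,y}=x\,AB-y\,BA$ with $m$-dependent coefficients rather than ordinary (anti)commutators, neither the skew-symmetry nor the Jacobi identity takes its textbook form: every contribution drags along a power of $\phi(p,q)$ and a ratio of deformed numbers. The delicate point is therefore the bookkeeping, namely showing that the $\phi(p,q)$-weights together with the coefficients $\hat{x},\hat{y},x,y$ of (\ref{coefcom}) conspire so that the Kronecker-delta terms cancel in the super $\mathcal{R}(p,q)$-deformed Jacobi identity, leaving $C_{\mathcal{R}(p,q)}(m_1)$ as the only admissible solution. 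Once this cancellation is exhibited, the relations (\ref{gsva1}) and (\ref{gsva2}) follow, and the vanishing of the remaining anticommutators is inherited directly from (\ref{crochet3}).
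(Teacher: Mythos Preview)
The paper does not actually supply a proof for this proposition: it is stated and then immediately followed by the remark that the special case $\mathcal{R}(x,1)=(q-1)^{-1}(x-1)$ recovers the super $q$-deformed Virasoro algebra of Ammar et~al. In effect, the paper treats (\ref{gsva1})--(\ref{gsva2}) as a \emph{definition} of the centrally extended algebra, with the centerless part inherited from (\ref{crochet1})--(\ref{crochet3}) and the explicit form of $C_{\mathcal{R}(p,q)}(m_1)$ simply postulated as the $\mathcal{R}(p,q)$-analogue of the classical cocycle.

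Your plan is therefore considerably more ambitious than anything the paper attempts: you propose to \emph{derive} the cocycle by imposing the super $\mathcal{R}(p,q)$-Jacobi identity of the preceding Lemma together with the classical limit. That is a legitimate strategy, and your identification of the centerless part with the earlier Witt computation is exactly right. Two cautions, though. First, since $\bar{l}^{\mathcal{R}(p,q)}_m$ and $\bar{G}^{\mathcal{R}(p,q)}_m$ are declared to be the \emph{same} concrete operators $-t^m\Delta$ and $-\theta t^m\Delta$, their twisted brackets are already fixed and literally leave no room for a central term; the extension only makes sense at the level of the abstract algebra, so you should phrase your argument accordingly rather than as a computation with the given differential operators. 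Second, the twisted Jacobi identity in the Lemma involves the rescaling $\rho$, and it is not obvious a~priori that it constrains a central 2-cocycle to the stated cubic form uniquely; you will need the classical-limit matching (which you correctly invoke) to close the argument, and even then uniqueness up to coboundaries should be stated rather than assumed. None of this is a gap so much as a warning that the ``proof'' you are reconstructing is one the paper never gives.
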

%\begin{remark}
%	The super Virasoro algebra corresponding to the quantum deformed algebras are derived as follows:
%	\begin{enumerate}
%		\item[(i)]
		Note that, the  super $q-$ deformed Virasoro algebra proposed by {\bf Ammar et {\it al}} \cite{AMS} can be recovered by taking   ${\mathcal R}(x,1)=(q-1)^{-1}(x-1).$

Following  the same procedure used to construct the ${\mathcal R}(p,q)-$ deformed Virasoro $2n-$ algebra (\ref{V2nalg}), we can also derive  the  super $\mathcal{R}(p,q)-$ deformed Virasoro $2n-$ algebra. It's generated by the  bosonic and fermionic operators $\bar{L}^{\mathcal{R}(p,q)}_m=-t^m\,\Delta$ of parity $0$ and $\bar{G}^{\mathcal{R}(p,q)}_m=-\theta\,t^m\,\Delta$ of parity $1$ satisfying the following relations: 
\begin{small}
	\begin{eqnarray*}
		\big[\bar{L}^{\mathcal{R}(p,q)}_{m_1},\cdots,\bar{L}^{\mathcal{R}(p,q)}_{m_{2n}}\big] =g_{{\mathcal R}(p,q)}(m_1,\cdots,m_{2n}) + C_{{\mathcal R}(p,q)}(m_1,\cdots,m_{2n}),
	\end{eqnarray*}
	\begin{eqnarray*}\label{sV2na}
		\big[\bar{L}^{\mathcal{R}(p,q)}_{m_1},\bar{L}^{\mathcal{R}(p,q)}_{m_2},\cdots, \bar{G}^{\mathcal{R}(p,q)}_{m_{2n}}\big]_{{\mathcal R}(p,q)}= f_{{\mathcal R}(p,q)}(m_1,m_2,\cdots m_{2n})+ {\mathcal CS}_{{\mathcal R}(p,q)}(m_1,\cdots m_{2n}),
	\end{eqnarray*}
\end{small}
where $g_{{\mathcal R}(p,q)}(m_1,\cdots,m_{2n})$ and $C_{{\mathcal R}(p,q)}(m_1,\cdots,m_{2n})$ are given by the relations (\ref{gv}), (\ref{cv}), 
\begin{small}
	\begin{eqnarray*}
		f_{{\mathcal R}(p,q)}(m_1,m_2,\cdots m_{2n})&=&{\big(q-p\big)^{{2n-1\choose 2}}\over \big(\phi(p,q)\big)^{-(n-1) \sum_{l=1}^{2n}m_l+1}}\bigg({[-2\sum_{l=1}^{2n}m_l-1]_{{\mathcal R}(p,q)}\over 2[\sum_{l=1}^{2n}m_l-1]_{{\mathcal R}(p,q)}}\bigg)\nonumber\\&\times&\prod_{1\leq i<j\leq 2n-1} \big([m_i]_{{\mathcal R}(p,q)}- [m_j]_{{\mathcal R}(p,q)}\big)\nonumber\\&\times&
		\prod_{i=1}^{2n-1} \big([m_i]_{{\mathcal R}(p,q)}- [m_{2n}+1]_{{\mathcal R}(p,q)}\big)G_{\sum_{l=1}^{2n}m_l},
	\end{eqnarray*}
\end{small}
\begin{small}
	\begin{eqnarray*}
		{\mathcal CS}_{\mathcal{R}(p,q)}(m_1,m_2,\cdots m_{2n})&=&\sum_{k=1}^{2n-1}{(-1)^{k+1}c(p,q)(\phi(p,q))^{-m_k}\over 6\times 2^{n-1}(n-1)!}{[m_{k}]_{\mathcal{R}(p,q)}\over [2m_{k}]_{{\mathcal R}(p,q)}}\nonumber\\&\times&[m_k+1]_{{\mathcal R}(p,q)}[m_k]_{{\mathcal R}(p,q)}[m_k-1]_{{\mathcal R}(p,q)}\delta_{m_k+m_{2n}+1,0}\nonumber\\&\times&\epsilon^{i_1\cdots i_{2n-2}}_{j_1\cdots j_{2n-2}}\prod_{s=1}^{n-1}{(\phi(p,q))^{-i_{2s-1}}[i_{2s-1}]_{{\mathcal R}(p,q)}\over [2\,i_{2s-1}]_{{\mathcal R}(p,q)}}\nonumber\\&\times&[i_{2s-1}+1]_{{\mathcal R}(p,q)}\,[i_{2s-1}]_{{\mathcal R}(p,q)}\,[i_{2s-1}-1]_{{\mathcal R}(p,q)}\delta_{i_{2s-1}+i_{2s},0},
	\end{eqnarray*}
\end{small}
with $\{j_1,\cdots, j_{2n-2}\}=\{1,\cdots,\hat{k},\cdots,2n-1\}$ and other anti-commutators are zeros.
	\section{A toy model for the  super $\mathcal{R}(p,q)-$ Virasoro constraints }
In this section, we construct another   super Witt $n-$ algebra from the $\mathcal{R}(p,q)-$ deformed quantum algebra. We use the super $\mathcal{R}(p,q)-$ Virasoro constraints to study a toy model. 

We consider  the operators defined by:
\begin{eqnarray}\label{to}
{\mathcal T}^{{\mathcal R}(p^{a},q^{a})}_m&:=&\Delta\,z^{m}\\
%\end{eqnarray}
%and 
%\begin{eqnarray}
\mathbb{T}^{{\mathcal R}(p^{a},q^{a})}_m&:=&-\theta\,\Delta\,z^{m}\label{go}.
\end{eqnarray}
The operators (\ref{to}) and (\ref{go}) can be rewritten as:
\begin{eqnarray*}
	\mathcal{T}^{{\mathcal R}(p^{a},q^{a})}_m&=&-[m]_{{\mathcal R}(p^{a},q^{a})}\,z^{m}\\
	\mathbb{T}^{{\mathcal R}(p^{a},q^{a})}_m&=&-\theta\,[m]_{{\mathcal R}(p^{a},q^{a})}\,z^{m}.
\end{eqnarray*}
The $\mathcal{R}(p,q)-$ deformed numbers (\ref{rpqnumber}) can be rewritten as \cite{HMM}:
\begin{eqnarray*}
	[n]_{\mathcal{R}(p,q)}=\frac{\tau^{n}_1-\tau^{n}_2}{\tau_{1}-\tau_{2}}, \quad \tau_{1}\neq \tau_{2},
\end{eqnarray*}
where $\tau_{i}, i\in\{1,2\}$ are the functions depending on the deformation parameters $p$ and $q.$ For illustration, we have some particular cases \cite{HMM}:
\begin{enumerate}
	\item [(i)]$q-$ Arick-Coon-Kuryskin  deformation \cite{AC,K}
	\begin{eqnarray*}
		\tau_1=1, \quad \tau_2=q \quad \mbox{and} \quad
		[n]_q ={1-q^n\over 1-q};
	\end{eqnarray*}
	\item [(ii)]$(p,q)-$ Jagannathan-Srinivasa deformation \cite{JS}
	\begin{eqnarray*}
		\tau_1=p, \quad \tau_2=q \quad \mbox{and} \quad
		[n]_{p,q} ={p^n-q^n\over p-q}.
	\end{eqnarray*}
\end{enumerate}
\begin{lemma}
	The following products hold.
	\begin{small}
		\begin{eqnarray}\label{rpqprod1}
		\mathcal{T}^{\mathcal{R}(p^{a},q^{a})}_m\,. \mathcal{T}^{\mathcal{R}(p^{b},q^{b})}_n&=&-{\big(\tau^{a+b}_1-\tau^{a+b}_2\big)\tau^{-m\,b}_1\over \big(\tau^{a}_1-\tau^{a}_2\big)\big(\tau^{b}_1-\tau^{b}_2\big)}{\mathcal T}^{{\mathcal R}(p^{a+b},q^{a+b})}_{m+n}\nonumber\\&+&{\tau^{-n\,b}_2\over \tau^{b}_1-\tau^{b}_2}{\mathcal T}^{{\mathcal R}(p^{a},q^{a})}_{m+n} + {\tau^{(m+n)a}_2\tau^{-m\,b}_1\over \tau^{a}_1-\tau^{a}_2}{\mathcal T}^{{\mathcal R}(p^{b},q^{b})}_{m+n}
		\end{eqnarray}
		and
		\begin{eqnarray}\label{rpqprod2}
		\mathcal{T}^{\mathcal{R}(p^{a},q^{a})}_m\,. \mathbb{T}^{\mathcal{R}(p^{b},q^{b})}_n&=&-{\big(\tau^{a+b}_1-\tau^{a+b}_2\big)\tau^{-(m+1)b}_1\over \big(\tau^{a}_1-\tau^{a}_2\big)\big(\tau^{b}_1-\tau^{b}_2\big)}\mathbb{T}^{{\mathcal R}(p^{a+b},q^{a+b})}_{m+n+1}\nonumber\\&+&{\tau^{-n\,b}_2\over \tau^{b}_1-\tau^{b}_2}\mathbb{T}^{\mathcal{R}(p^{a},q^{a})}_{m+n+1}+ {\tau^{(m+n+1)\,a}_2\tau^{-(m+1)b}_1\over \tau^{a}_1-\tau^{a}_2}\mathbb{T}^{\mathcal{R}(p^{b},q^{b})}_{m+n+1} .
		\end{eqnarray}
	\end{small}
\end{lemma}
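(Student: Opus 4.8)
The plan is to prove both product identities by one direct computation, exploiting the simplified forms recorded just before the statement. After the action of the even $\sigma$-derivation $\Delta$ on $z^{m}$ and on $\theta z^{m}$, the operators reduce to $\mathcal{T}^{\mathcal{R}(p^{a},q^{a})}_m=-[m]_{\mathcal{R}(p^{a},q^{a})}\,z^{m}$ and $\mathbb{T}^{\mathcal{R}(p^{a},q^{a})}_m=-\theta\,[m]_{\mathcal{R}(p^{a},q^{a})}\,z^{m}$, so each side of (\ref{rpqprod1}) and (\ref{rpqprod2}) is a deformed scalar times a fixed monomial ($z^{m+n}$, respectively $\theta z^{m+n+1}$). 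First I would substitute everywhere the geometric form $[j]_{\mathcal{R}(p^{c},q^{c})}=\frac{\tau_1^{cj}-\tau_2^{cj}}{\tau_1^{c}-\tau_2^{c}}$ recalled above, turning every deformed number into an explicit rational expression in $\tau_1,\tau_2$. The whole claim then collapses to a single algebraic identity between Laurent monomials in $\tau_1,\tau_2$, with the common monomial factor peeled off; since that factor carries no $\tau$-dependence, the identity will hold uniformly and hence as an operator identity.

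For (\ref{rpqprod1}) the left-hand coefficient is $[m]_{\mathcal{R}(p^{a},q^{a})}[n]_{\mathcal{R}(p^{b},q^{b})}$, whose numerator $(\tau_1^{am}-\tau_2^{am})(\tau_1^{bn}-\tau_2^{bn})$ I would expand into four monomials over the denominator $(\tau_1^{a}-\tau_2^{a})(\tau_1^{b}-\tau_2^{b})$. On the right I would bring the three terms to this same common denominator, expanding $[m+n]_{\mathcal{R}(p^{a+b},q^{a+b})}$, $[m+n]_{\mathcal{R}(p^{a},q^{a})}$ and $[m+n]_{\mathcal{R}(p^{b},q^{b})}$ through the identical geometric form. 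The engine of the proof is then purely combinatorial: the prefactors $\tau_1^{-mb}$, $\tau_2^{-nb}$ and $\tau_2^{(m+n)a}\tau_1^{-mb}$ are precisely the exponent shifts that make one pair of the six right-hand monomials cancel (the ``diagonal'' contributions of the $(a{+}b)$-term against those of the $b$-term), after which the four surviving monomials regroup into a product of two binomials that reassembles exactly the expansion of $[m]_{\mathcal{R}(p^{a},q^{a})}[n]_{\mathcal{R}(p^{b},q^{b})}$.

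For the fermionic product (\ref{rpqprod2}) the scheme applies verbatim once the Grassmann factor $\theta$ is carried along and the index shift $n\mapsto n+1$ dictated by $\Delta(\theta\,t^{m})=\big([m]_{\mathcal{R}(p,q)}+(\phi(p,q))^{m}\big)\theta\,t^{m}$ is inserted in the second factor. This shift is exactly what promotes the target indices from $m+n$ to $m+n+1$ and replaces $\tau_1^{-mb}$ by $\tau_1^{-(m+1)b}$ throughout; no new idea is needed, one simply repeats the previous expansion with $n$ replaced by $n+1$ and records the single overall $\theta$.

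I expect the only genuine obstacle to be the exponent bookkeeping in the middle step: after substituting the geometric form and clearing the common denominator, each Laurent monomial in $\tau_1,\tau_2$ produced on the right must be matched termwise, either with a cancelling partner or with a summand of $(\tau_1^{am}-\tau_2^{am})(\tau_1^{bn}-\tau_2^{bn})$. Verifying that the three stated prefactors pin these exponent vectors down exactly, with the correct signs, is the delicate part; everything else is routine, since the common factor $z^{m+n}$ (respectively $\theta z^{m+n+1}$) separates cleanly and plays no role in the matching.
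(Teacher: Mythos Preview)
The paper states this lemma with no proof whatsoever, so there is nothing in the text to compare your argument against; a direct substitution-and-expansion of the kind you outline is the natural route and is presumably what the authors have in mind.

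There is, however, a genuine gap in your setup. You take the ``rewritten'' form $\mathcal{T}^{\mathcal{R}(p^{a},q^{a})}_m=-[m]_{\mathcal{R}(p^{a},q^{a})}\,z^{m}$ at face value and thereby reduce the left side of (\ref{rpqprod1}) to the \emph{commutative} scalar $[m]_{a}[n]_{b}\,z^{m+n}$. But the right side of (\ref{rpqprod1}) is manifestly not symmetric under $(a,m)\leftrightarrow(b,n)$, and the very next proposition extracts a nonzero commutator from exactly these products; a symmetric left side cannot produce that. The same issue surfaces in (\ref{rpqprod2}): multiplying the scalars $-[m]_{a}z^{m}$ and $-\theta[n]_{b}z^{n}$ lands in $\theta\,z^{m+n}$, not $\theta\,z^{m+n+1}$, so the index shift you attribute to $\Delta(\theta t^{m})$ could never arise from a bare product of the simplified forms. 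The resolution is that $\mathcal{T}^{a}_{m}=\Delta\,z^{m}$ must be read as the \emph{operator} ``multiply by $z^{m}$, then apply the $\sigma$-derivation $\Delta$'' (the displayed scalar $-[m]_{a}z^{m}$ is only its value on the constant function), and the dot in the lemma is operator composition. With this reading, the $\sigma$-derivation rule (\ref{deltaxy}) together with the dilation expansion $\Delta_{c}=(\tau_1^{c}-\tau_2^{c})^{-1}(\sigma_{\tau_1^{c}}-\sigma_{\tau_2^{c}})$ is what generates the three-term right-hand side and the shift $m+n\mapsto m+n+1$ in the fermionic case. Once that operator nature is kept, the remainder of your plan---insert $[j]_{c}=(\tau_1^{cj}-\tau_2^{cj})/(\tau_1^{c}-\tau_2^{c})$, clear denominators, cancel the diagonal pair, and regroup the surviving monomials---is the right bookkeeping; it is only the starting interpretation, not the exponent matching, that needs repair.
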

\begin{proposition}
	The operators (\ref{to}) and (\ref{go}) satisfy the following
	commutation relations:
	\begin{small}
		\begin{eqnarray}\label{scrto}
		\Big[\mathcal{T}^{\mathcal{R}(p^{a},q^{a})}_m, \mathcal{T}^{\mathcal{R}(p^{b},q^{b})}_n\Big]&=&{\big(\tau^{a+b}_1-\tau^{a+b}_2\big)\big(\tau^{-na}_1-\tau^{-mb}_1\big)\over \big(\tau^{a}_1-\tau^{a}_2\big)\big(\tau^{b}_1-\tau^{b}_2\big)}{\mathcal T}^{{\mathcal R}(p^{a+b},q^{a+b})}_{m+n}\nonumber\\ &-&{\tau^{(m+n)b}_2\big(\tau^{-na}_1-\tau^{-mb}_2\big)\over \tau^{b}_1-\tau^{b}_2}{\mathcal T}^{\mathcal {R}(p^{a},q^{a})}_{m+n}\nonumber\\&+& \frac{\tau^{(m+n)a}_2\big(\tau^{-mb}_1-\tau^{-na}_2\big)}{\tau^{a}_1-\tau^{a}_2}{\mathcal T}^{{\mathcal R}(p^{b},q^{b})}_{m+n},
		\end{eqnarray}
		\begin{eqnarray}\label{scrgo}
		\Big[{\mathcal T}^{{\mathcal R}(p^{a},q^{a})}_m, \mathbb{T}^{{\mathcal R}(p^{b},q^{b})}_n\Big]&=&{(\tau^{a+b}_1-\tau^{a+b}_2)(\tau^{-na}_1-\tau^{-mb+a}_1)\over (\tau^{a}_1-\tau^{a}_2)(\tau^{b}_1-\tau^{b}_2)}\mathbb{T}^{{\mathcal R}(p^{a+b},q^{a+b})}_{m+n}\nonumber\\ &+&{\tau^{b(m+n)}_2(\tau^{-bm}_2\tau^{a}_1-\tau^{-an}_1)\over \tau^{b}_1-\tau^{b}_2}\mathbb{T}^{{\mathcal R}(p^{a},q^{a})}_{m+n}\nonumber\\&+& {\tau^{a(m+n)}_2(\tau^{-mb}_1\tau^{a}_2-\tau^{-an}_2)\over \tau^{a}_1-\tau^{a}_2}{\mathbb T}^{{\mathcal R}(p^{b},q^{b})}_{m+n}+ f(m,n),
		\end{eqnarray}
	\end{small}
	where
	\begin{eqnarray*}
		f(m,n)&=&-{\big(\tau^{a+b}_1-\tau^{a+b}_2\tau^{-(m+1)b}_{1}\tau^{b}_2\big)}\mathbb{T}^{{\mathcal R}(p^{a+b},q^{a+b})}_{1}\nonumber\\ &+&{\tau^{(m+n)a}_2\tau^{n\,b}_2\over \tau^{b}_1-\tau^{b}_2}\mathbb{T}^{{\mathcal R}(p^{a},q^{a})}_{1}+ {\tau^{(m+n)(a+b)}_2\tau^{-(m+1)b}_1\tau^{a}_2\over \tau^{a}_1-\tau^{a}_2}\mathbb{T}^{{\mathcal R}(p^{b},q^{b})}_{1}
	\end{eqnarray*}
	and other anti-commutators are zeros.
\end{proposition}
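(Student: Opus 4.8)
The plan is to derive both brackets by antisymmetrising the two product formulas (\ref{rpqprod1}) and (\ref{rpqprod2}) established in the previous Lemma. Since $\mathcal{T}^{\mathcal{R}(p^a,q^a)}_m$ is even and $\mathbb{T}^{\mathcal{R}(p^a,q^a)}_m$ is odd, and the first slot of every bracket in the statement is occupied by an even operator, each bracket is an ordinary commutator $[X,Y]=XY-YX$; no super-sign enters at the level of the bracket, and the only place where the odd grading matters is in the bookkeeping of the Grassmann factor $\theta$ and the identity $\theta^2=0$ that forces the $\{\mathbb{T},\mathbb{T}\}$ anticommutators to vanish.

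For (\ref{scrto}) I would write $[\mathcal{T}^{\mathcal{R}(p^a,q^a)}_m,\mathcal{T}^{\mathcal{R}(p^b,q^b)}_n]=\mathcal{T}^{\mathcal{R}(p^a,q^a)}_m\,\mathcal{T}^{\mathcal{R}(p^b,q^b)}_n-\mathcal{T}^{\mathcal{R}(p^b,q^b)}_n\,\mathcal{T}^{\mathcal{R}(p^a,q^a)}_m$, apply (\ref{rpqprod1}) verbatim to the first product, and apply it to the second after the substitution $(a,m)\leftrightarrow(b,n)$. Both expansions live in the span of the three operators $\mathcal{T}^{\mathcal{R}(p^{a+b},q^{a+b})}_{m+n}$, $\mathcal{T}^{\mathcal{R}(p^a,q^a)}_{m+n}$ and $\mathcal{T}^{\mathcal{R}(p^b,q^b)}_{m+n}$, so after subtracting I would gather the coefficient of each of the three operators separately. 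The coefficient of the mixed-level operator $\mathcal{T}^{\mathcal{R}(p^{a+b},q^{a+b})}_{m+n}$ factors at once into $(\tau_1^{a+b}-\tau_2^{a+b})(\tau_1^{-na}-\tau_1^{-mb})/[(\tau_1^a-\tau_2^a)(\tau_1^b-\tau_2^b)]$, while the two single-level coefficients are reorganised by elementary manipulation of the exponents of $\tau_1,\tau_2$ (using $[n]_{\mathcal{R}(p,q)}=(\tau_1^n-\tau_2^n)/(\tau_1-\tau_2)$ from (\ref{rpqnumber})), giving (\ref{scrto}).

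For the mixed relation (\ref{scrgo}) the scheme is identical, except that (\ref{rpqprod2}) only supplies the product $\mathcal{T}^{\mathcal{R}(p^a,q^a)}_m\,\mathbb{T}^{\mathcal{R}(p^b,q^b)}_n$, whereas the reversed product $\mathbb{T}^{\mathcal{R}(p^b,q^b)}_n\,\mathcal{T}^{\mathcal{R}(p^a,q^a)}_m$ is not listed. I would establish it by the same direct calculation used to prove the Lemma: writing $\mathbb{T}^{\mathcal{R}(p^b,q^b)}_n=-\theta\,\Delta z^n$, commuting the even operator $\mathcal{T}^{\mathcal{R}(p^a,q^a)}_m$ past the odd factor $\theta$, and expanding with the $\sigma$-derivation rule (\ref{deltaxy}) together with $\theta^2=0$. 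Subtracting the two products and collecting the coefficients of $\mathbb{T}^{\mathcal{R}(p^{a+b},q^{a+b})}_{m+n}$, $\mathbb{T}^{\mathcal{R}(p^a,q^a)}_{m+n}$ and $\mathbb{T}^{\mathcal{R}(p^b,q^b)}_{m+n}$ produces the three displayed structure constants, and the remaining contributions — those carrying the grading shift induced by $\theta$ and collapsing onto the operators indexed by $1$ — regroup into the correction term $f(m,n)$.

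The manipulation of the $\tau_1,\tau_2$ exponents is routine; the genuine difficulties are two. First, one must derive the reversed product $\mathbb{T}^{\mathcal{R}(p^b,q^b)}_n\,\mathcal{T}^{\mathcal{R}(p^a,q^a)}_m$ from scratch, since it does not appear in the Lemma. Second, one must track the index shift carried by the Grassmann variable $\theta$ with enough care that the leftover low-order pieces assemble exactly into $f(m,n)$; this is where sign and subscript errors are most likely, and checking against the vanishing of $\{\mathbb{T},\mathbb{T}\}$ (which follows directly from $\theta^2=0$) provides a useful consistency test that closes the argument.
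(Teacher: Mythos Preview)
Your approach is correct and matches what the paper intends: the paper gives no explicit proof of the Proposition, but the placement of the Lemma with the product formulas (\ref{rpqprod1})--(\ref{rpqprod2}) immediately before it makes clear that the commutators are obtained exactly as you describe, by antisymmetrising those products and collecting coefficients. Your observation that the reversed product $\mathbb{T}^{\mathcal{R}(p^{b},q^{b})}_n\,\mathcal{T}^{\mathcal{R}(p^{a},q^{a})}_m$ must be computed separately (it is not supplied by the Lemma) and that the index shift from the $\theta$--action is what generates the residual $f(m,n)$ terms is precisely the missing ingredient, and your plan to obtain it via the $\sigma$--derivation rule (\ref{deltaxy}) together with $\theta^2=0$ is the right one.
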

Setting $a=b=1,$ we obtain:
\begin{small}
	\begin{eqnarray*}
		\Big[{\mathcal T}^{{\mathcal R}(p,q)}_m, {\mathcal T}^{{\mathcal R}(p,q)}_n\Big]&=&{(\tau^{-n}_1-\tau^{-m}_1)\over (\tau_1-\tau_2)}\,[2]_{{\mathcal R}(p,q)}{\mathcal T}^{{\mathcal R}(p^{2},q^{2})}_{m+n}\nonumber\\&-&{\tau^{m+n}_2\over \tau_1-\tau_2}\Big((\tau^{-n}_1-\tau^{-m}_2)-(\tau^{-m}_1-\tau^{-n}_2)\Big) {\mathcal T}^{{\mathcal R}(p,q)}_{m+n},
	\end{eqnarray*}
\end{small}
\begin{small}
	\begin{eqnarray*}
		\Big[{\mathcal T}^{{\mathcal R}(p,q)}_m, \mathbb{T}^{{\mathcal R}(p,q)}_n\Big]&=&{\big(\tau^{-n}_1-\tau^{-m+1}_1\big)\over \tau_1-\tau_2}\,[2]_{{\mathcal R}(p,q)}\mathbb{T}^{{\mathcal R}(p^{2},q^{2})}_{m+n}+ f(m,n)\nonumber\\ &+&{\tau^{m+n}_2\over \tau_1-\tau_2}\bigg(\big(\tau^{-m}_2\,\tau_1-\tau^{-n}_1\big)-\big(\tau^{-m}_1\tau_2-\tau^{-n}_2\big)\bigg)\mathbb{T}^{{\mathcal R}(p,q)}_{m+n},
	\end{eqnarray*}
where
\begin{eqnarray*}
	f(m,n)=-{\tau^{-m-1}_{1}\tau^{2(m+n)}_{2}\over \big(\tau_1-\tau_2\big)}\,[2]_{{\mathcal R}(p,q)}\mathbb{T}^{{\mathcal R}(p^{2},q^{2})}_{1} +{\tau^{m+n}_2\big(\tau^{n}_2+\tau^{m+n}_2\tau^{-m-1}_1\tau_2\big)\over \tau_1-\tau_2}\mathbb{T}^{{\mathcal R}(p,q)}_{1}
\end{eqnarray*}
\end{small}
and other anti-commutators are zeros.

We consider the $n-$ bracket defined by:
\begin{eqnarray*}
	\Big[{\mathcal T}^{{\mathcal R}(p^{a_1},q^{a_1})}_{m_1},\cdots,{\mathcal T}^{{\mathcal R}(p^{a_n},q^{a_n})}_{m_n}
	\Big]:=\epsilon^{i_1 \cdots i_n}_{1 \cdots n}\,{\mathcal T}^{{\mathcal R}(p^{a_{i_1}},q^{a_{i_1}})}_{m_{i_1}} \cdots {\mathcal T}^{{\mathcal R}(p^{a_{i_n}},q^{a_{i_n}})}_{m_{i_n}},
\end{eqnarray*}
where $\epsilon^{i_1 \cdots i_n}_{1 \cdots n}$ is the L\'evi-Civit\'a symbol defined by (\ref{LCs}).
Our study is focused in the case with the same ${\mathcal R}(p^{a},q^{a})$ leads to
\begin{eqnarray*}
	\Big[{\mathcal T}^{{\mathcal R}(p^{a},q^{a})}_{m_1},\cdots,{\mathcal T}^{{\mathcal R}(p^{a},q^{a})}_{m_n}
	\Big]=\epsilon^{1\cdots n}_{1\cdots n}\,{\mathcal T}^{{\mathcal R}(p^{a},q^{a})}_{m_{1}}\cdots {\mathcal T}^{{\mathcal R}(p^{a},q^{a})}_{m_{n}}.
\end{eqnarray*}
Putting $a=b$ in the relation (\ref{scrto}), we obtain:
\begin{small}
	\begin{eqnarray*}\label{crtob}
		\Big[{\mathcal T}^{{\mathcal R}(p^{a},q^{a})}_m, {\mathcal T}^{{\mathcal R}(p^{a},q^{a})}_n\Big]&=&{\big(\tau^{-na}_1-\tau^{-ma}_1\big)\over \big(\tau^{a}_1-\tau^{a}_2\big)}\,[2]_{{\mathcal R}(p^{a},q^{a})}{\mathcal T}^{{\mathcal R}(p^{2a},q^{2a})}_{m+n}\nonumber\\&-&{\tau^{(m+n)a}_2\over \tau^{a}_1-\tau^{a}_2}\Big(\big(\tau^{-na}_1-\tau^{-ma}_1\big)+\big(\tau^{-na}_2-\tau^{-ma}_2\big)\Big) {\mathcal T}^{{\mathcal R}(p^{a},q^{a})}_{m+n}.
	\end{eqnarray*}
\end{small}
The $n-$ bracket takes the following form:
\begin{small}
	\begin{eqnarray*}\label{crna}
		\Big[\mathcal{T}^{{\mathcal R}(p^{a},q^{a})}_{m_1},\cdots, {\mathcal T}^{{\mathcal R}(p^{a},q^{a})}_{m_n}\Big]&=&{(-1)^{n+1}\over \big(\tau^{a}_1-\tau^{a}_2\big)^{n-1}}\Big( M^n_{a}[n]_{{\mathcal R}(p^{a},q^{a})}{\mathcal T}^{{\mathcal R}(p^{n\,a},q^{n\,a})}_{m_1+\cdots+m_n}\nonumber\\ &-& {[n-1]_{{\mathcal R}(p^{a},q^{a})}\over  \tau^{-a\big(\sum_{l=1}^{n}m_l\big)}_2}\big(M^n_{a}+ C^n_{a}\big){\mathcal T}^{{\mathcal R}(p^{(n-1)a},q^{(n-1)a})}_{m_1+\cdots+m_n}\Big),
	\end{eqnarray*}
\end{small}
where 
\begin{small}
	\begin{eqnarray*}
		M^n_{a}&=& \tau^{-a(n-1)\sum_{s=1}^{n}m_s}_1\Big(\big(\tau^{a}_1-\tau^{a}_2\big)^{n\choose 2}\prod_{1\leq j < k \leq n}\Big([m_k]_{{\mathcal R}(p^{a},q^{a})}-[m_j]_{{\mathcal R}(p^{a},q^{a})}\Big)\nonumber\\&+&\prod_{1\leq j < k \leq n}\Big(\tau^{a\,m_k}_2-\tau^{a\,m_j}_2\Big)\Big)
	\end{eqnarray*}
	and 
	\begin{eqnarray*}
		C^{n}_{a}
		&=&\tau^{-a(n-1)\sum_{s=1}^{n}m_s}_2\Big(\big(\tau^{a}_1-\tau^{\alpha}_2\big)^{n\choose 2}\prod_{1\leq j < k \leq n}\Big([m_k]_{{\mathcal R}(p^{a},q^{a})}-[m_j]_{{\mathcal R}(p^{a},q^{a})}\Big)\nonumber\\&+&(-1)^{n-1}\prod_{1\leq j < k \leq n}\Big(\tau^{a\,m_k}_1-\tau^{a\,m_j}_1\Big)\Big).
	\end{eqnarray*}
\end{small}
From the super multibracket of order $n$ (\ref{smb}), we define the $\mathcal{R}(p,q)-$ deformed super $n-$ bracket as follows:
\begin{small}
	\begin{eqnarray*}\label{snbracket}
		\big[\mathcal{T}^{{\mathcal R}(p^{a},q^{a})}_{m_1},\mathcal{T}^{{\mathcal R}(p^{a},q^{a})}_{m_2},\cdots, \mathbb{T}^{{\mathcal R}(p^{a},q^{a})}_{m_n}\big]&:=&\sum_{j=0}^{n-1}(-1)^{n-1+j}\epsilon^{i_1\ldots i_{n-1}}_{12\cdots n-1}\mathcal{T}^{{\mathcal R}(p^{a},q^{a})}_{m_{i_1}}\cdots \mathcal{T}^{{\mathcal R}(p^{a},q^{a})}_{m_{i_j}}\nonumber\\&\times&
		\mathbb{T}^{{\mathcal R}(p^{a},q^{a})}_{m_{n}}\mathcal{T}^{{\mathcal R}(p^{a},q^{a})}_{m_{i_{j+1}}}\cdots \mathcal{T}^{{\mathcal R}(p^{a},q^{a})}_{m_{i_{n-1}}}.
	\end{eqnarray*}
\end{small}
From the relation (\ref{scrgo}) with $a=b,$ we obtain:
\begin{small}
	\begin{eqnarray*}
		\Big[{\mathcal T}^{{\mathcal R}(p^{a},q^{a})}_m, \mathbb{T}^{{\mathcal R}(p^{a},q^{a})}_n\Big]&=&{\big(\tau^{-an}_1-\tau^{-(m-1)a}_1\big)\over \big(\tau^{a}_1-\tau^{a}_2\big)}[2]_{{\mathcal R}(p^{a},q^{a})}\mathbb{T}^{{\mathcal R}(p^{2a},q^{2a})}_{m+n}+ f(m,n)\nonumber\\ &+&{\tau^{(m+n)a}_2\over \tau^{a}_1-\tau^{a}_2}\bigg((\tau^{-am}_2\tau^{a}_1-\tau^{-an}_1)+(\tau^{-am}_1\tau^{a}_2-\tau^{-an}_2)\bigg)\mathbb{T}^{{\mathcal R}(p^{a},q^{a})}_{m+n},
	\end{eqnarray*}
	where
	\begin{eqnarray*}
		f(m,n)=-{\tau^{-(m+1)a}_{1}\tau^{a(m+n)}_{2}\over \big(\tau^{a}_1-\tau^{a}_2\big)}\Big(\tau^{am}_2[2]_{{\mathcal R}(p^{a},q^{a})}\mathbb{T}^{{\mathcal R}(p^{2a},q^{2a})}_{1}-{[2(m+1)]_{\mathcal{R}(p^{a},q^{a})}\over [m+1]_{\mathcal{R}(p^{a},q^{a})}}\mathbb{T}^{\mathcal{R}(p^{a},q^{a})}_{1}\Big).
	\end{eqnarray*}
\end{small}
Thus, the super $n-$ bracket can be rewritten as follows:
\begin{small}
	\begin{eqnarray*}
		\Big[\mathcal{T}^{{\mathcal R}(p^{a},q^{a})}_{m_1},\cdots, \mathbb{T}^{{\mathcal R}(p^{a},q^{a})}_{m_n}\Big]&=&{(-1)^{n+1}\over \big(\tau^{a}_1-\tau^{a}_2\big)^{n-1}}\Big( A^n_{a}[n]_{{\mathcal R}(p^{a},q^{a})}\mathbb{T}^{{\mathcal R}(p^{n\,a},q^{n\,a})}_{m_1+\cdots+m_n}\nonumber\\ &-& {[n-1]_{{\mathcal R}(p^{a},q^{a})}\over  \tau^{-a\big(\sum_{l=1}^{n}m_l\big)}_2}\big(F^n_{a}+ S^n_{a}\big){\mathcal T}^{{\mathcal R}(p^{(n-1)a},q^{(n-1)a})}_{m_1+\cdots+m_n}\Big)\nonumber\\&+& f\big(m_{1},\cdots, m_{n}\big),
	\end{eqnarray*}
\end{small}
where 
\begin{small}
	\begin{eqnarray*}
		A^n_{a}&=& \tau^{-a(n-1)\sum_{s=1}^{n}(m_s-1)}_1\Big(\big(\tau^{a}_1-\tau^{a}_2\big)^{n\choose 2}\prod_{1\leq j < k \leq n}\Big([m_k-1]_{{\mathcal R}(p^{a},q^{a})}-[m_j]_{{\mathcal R}(p^{a},q^{a})}\Big)\nonumber\\&+&\prod_{1\leq j < k \leq n}\Big(\tau^{a(m_k-1)}_2-\tau^{a\,m_j}_2\Big)\Big),
	\end{eqnarray*}
	\begin{eqnarray*}
		F^n_{a}&=& \tau^{-a(n-1)\sum_{s=1}^{n}m_s}_1\Big(\big(\tau^{a}_1-\tau^{a}_2\big)^{n\choose 2}\prod_{1\leq j < k \leq n}\Big([m_k]_{{\mathcal R}(p^{a},q^{a})}-[m_j]_{{\mathcal R}(p^{a},q^{a})}\tau^{n\choose 2}_2\Big)\nonumber\\&+&\prod_{1\leq j < k \leq n}\Big(\tau^{a\,m_k}_2-\tau^{a\,m_j}_2\tau^{n\choose 2}_2\Big)\Big),
	\end{eqnarray*} 
	\begin{eqnarray*}
		S^{n}_{a}
		&=&\tau^{-a(n-1)\sum_{s=1}^{n}m_s}_2\Big(\big(\tau^{a}_1-\tau^{\alpha}_2\big)^{n\choose 2}\prod_{1\leq j < k \leq n}\Big([m_k]_{{\mathcal R}(p^{a},q^{a})}-[m_j]_{{\mathcal R}(p^{a},q^{a})}\tau^{n\choose 2}_1\Big)\nonumber\\&+&(-1)^{n-1}\prod_{1\leq j < k \leq n}\Big(\tau^{a\,m_k}_1-\tau^{a\,m_j}_1\tau^{n\choose 2}_1\Big)\Big)
	\end{eqnarray*}
	and 
	\begin{eqnarray*}
		f\big(m_{1},\ldots, m_{n}\big)&=&{(-1)^{n+1}\tau^{-(m+1)a}_{1}\tau^{a\sum_{l=1}^{n}m_l}_{2}\over \big(\tau^{a}_1-\tau^{a}_2\big)^{n-1}}\Big( \tau^{a\,m}_2[n]_{{\mathcal R}(p^{a},q^{a})}\mathbb{T}^{{\mathcal R}(p^{n\,a},q^{n\,a})}_{1}\nonumber\\ &-& {[2(m+1)]_{\mathcal{R}(p^{a},q^{a})}\over [m+1]_{\mathcal{R}(p^{a},q^{a})}}\mathbb{T}^{{\mathcal R}(p^{(n-1)a},q^{(n-1)a})}_{1}\Big).
	\end{eqnarray*}
\end{small}	

Let us consider the generating function with infinitely many	parameters
presented by  \cite{NZ}: $$Z^{toy}(t)=\int \, \,x^{\gamma}\,\exp\left(\displaystyle\sum_{s=0}^{\infty}{t_s\over s!}x^s\right)\,dx.$$
We assume that the following relation holds for the linear maps $\Delta$ given by the relation (\ref{deltaxy})
\begin{small}
	\begin{eqnarray*}
		\int_{\mathbb{R}}\,\Delta \,f(x)d\,x=0.
	\end{eqnarray*} 
\end{small}
Taking $f(x)=x^{m+\gamma}\,\exp\left(\displaystyle\sum_{s=0}^{\infty}{t_s\over s!}x^s\right),$ we have
\begin{eqnarray*}
	\int_{-\infty}^{+\infty}\Delta \,\left(x^{m+\gamma}\,\exp\left(\sum_{s=0}^{\infty}{t_s\over s!}x^s\right)\right)d\,x=0.
\end{eqnarray*}
We consider the following expression
\begin{eqnarray*}
	\exp\left(\displaystyle\sum_{s=0}^{\infty}{t_s\over s!}x^s\right)=\sum_{n=0}^{\infty}B_n(t_1,\cdots,t_n){x^n\over n!},
\end{eqnarray*}
where $B_n$ is the Bell polynomials. Then 
\begin{small}
	\begin{eqnarray*}
		\Delta \left(x^{m+\gamma}\,\exp\left(\displaystyle\sum_{s=0}^{\infty}{t_s\over s!}x^s\right)\right)
		&=&x^{m+\gamma}[m+\gamma]_{{\mathcal R}(p^{a},q^{a})}\exp\left(\displaystyle\sum_{s=0}^{\infty}{t_s\over s!}x^s\right)\nonumber\\ &+& 
		{\big(\phi(p,q)\big)^{m+\gamma}\over (\tau^{a}_1 -\tau^{a}_2)x^{-k-m}}\sum_{k=1}^{\infty}{B_k(t^{a}_1,\cdots,t^{a}_k)\over k!}\exp\left(\displaystyle\sum_{s=0}^{\infty}{t_s\over s!}x^{s+\gamma}\right),
	\end{eqnarray*}
	where $t^{a}_k=(\tau^{a\,k}_1-\tau^{a\,k}_2)t_k.$ Then,  from the relation 
	$$\mathcal{T}^{{\mathcal R}(p^{a},q^{a})}_m\,Z^{(toy)}(t)=0,\quad m\geq 0,$$
	the operator (\ref{to}) takes the following form:
	\begin{eqnarray*}
		\mathcal{T}^{{\mathcal R}(p^{a},q^{a})}_{m}=[m+\gamma]_{{\mathcal R}(p^{a},q^{a})}\,m!\,{\partial\over \partial t_m}+ {\big(\phi(p,q)\big)^{m+\gamma}\over \tau^{a}_1 - \tau^{a}_2}\sum_{k=1}^{\infty}{(k+m)!\over k!}B_k(t^{a}_1,\cdots,t^{a}_k){\partial\over \partial t_{k+m}}.
	\end{eqnarray*}
	Similarly, we obtain
	$$\mathbb{T}^{{\mathcal R}(p^{a},q^{a})}_m\,Z^{(toy)}(t)=0,\quad m\geq 0,$$
	and
	\begin{eqnarray*}
		\mathbb{T}^{{\mathcal R}(p^{a},q^{a})}_m=\theta\bigg([m+\gamma]_{{\mathcal R}(p^{a},q^{a})}\,m!\,{\partial\over \partial t_m} + {\big(\phi(p,q)\big)^{m+\gamma}\over \tau^{a}_1 - \tau^{a}_2}\sum_{k=1}^{\infty}{(k+m)!\over k!}B_k(t^{a}_1,\cdots,t^{a}_k){\partial\over \partial t_{k+m}}\bigg).
	\end{eqnarray*}
\end{small} 
Putting $\bar{m}=m+\gamma,\quad \bar{n}=n+\gamma,$ and by changing $n!\,{\partial\over \partial t_n}\longleftrightarrow x^n,$ we show directly that  the products $\mathcal{T}^{{\mathcal R}(p^{a},q^{a})}_{m}\,.\mathcal{T}^{{\mathcal R}(p^{b},q^{b})}_{n}$ and $\mathcal{T}^{{\mathcal R}(p^{a},q^{a})}_{m}\,.\mathbb{T}^{{\mathcal R}(p^{b},q^{b})}_{n}$ are respectively equivalent to (\ref{rpqprod1}) and (\ref{rpqprod2}).

%Note that, the results corresponding to the {\bf Arick-Coon- Kuryskin} deformation \cite{AC, K} can be derived  from the general formalism by setting $\mathcal{R}(x,1)=(q-1)^{-1}(x-1).$
%%The $q-$ deformed super Witt algebra  
%by setting $\mathcal{R}(x,1)=(q-1)^{-1}(x-1).$
\subsection{$q-$ deformed super Virasoro constraints}
The results obtained here can be deduced from the general formalism
%The $q-$ deformed super Witt algebra  corresponding to the {\bf Arick-Coon- Kuryskin} deformation \cite{AC, K} can be derived 
by setting $\mathcal{R}(x,1)=(q-1)^{-1}(x-1).$ Then, the $q-$ deformed  operators given by:
% are  defined as follows:
\begin{eqnarray}\label{stoAC}
\,{\mathcal T}^{q^{a}}_m&=&\Delta\,z^{m}\\
\,\mathbb{T}^{q^{a}}_m&=&-\theta\,\Delta\,z^{m}\label{sgoAC}
\end{eqnarray}
%Using the $q-$ deformed number, the operators (\ref{stoAC}) and (\ref{sgoAC}) take the following form:
%\begin{eqnarray*}
%	\mathcal{T}^{q^{a}}_m=-[m]_{q^{a}}\,z^{m},
%\end{eqnarray*}
%\begin{eqnarray*}
%	\mathbb{T}^{q^{a}}_m=-\theta\,[m]_{q^{a}}\,z^{m}
%\end{eqnarray*}
satisfy the  products 
\begin{eqnarray}\label{ACprod1}
\mathcal{T}^{q^{a}}_m\,. \mathcal{T}^{q^{b}}_n=-{\big(q^{a+b}-1\big)\over \big(q^{a}-1\big)\big(q^{b}-1\big)}{\mathcal T}^{q^{a+b}}_{m+n}+{1\over q^{b}-1}{\mathcal T}^{q^{a}}_{m+n} + {q^{-m\,b}\over q^{a}-1}{\mathcal T}^{q^{b}}_{m+n}
\end{eqnarray}
and
\begin{small}
	\begin{eqnarray}\label{ACprod2}
	\mathcal{T}^{q^{a}}_m\,. \mathbb{T}^{q^{b}}_n={-\big(q^{a+b}-1\big)q^{-(m+1)b}\over \big(q^{a}-1\big)\big(q^{b}-1\big)}\mathbb{T}^{q^{a+b}}_{m+n+1}+{\mathbb{T}^{q^{a}}_{m+n+1}\over q^{b}-1}+ {q^{-(m+1)b}\over q^{a}-1}\mathbb{T}^{q^{b}}_{m+n+1} .
	\end{eqnarray}
\end{small}
Moreover, the following
commutation relations holds:
\begin{small}
	\begin{eqnarray}\label{scrtoAC}
	\Big[\mathcal{T}^{q^{a}}_m, \mathcal{T}^{q^{b}}_n\Big]&=&{\big(q^{a+b}-1\big)\big(q^{-na}-q^{-mb}\big)\over \big(q^{a}-1\big)\big(q^{b}-1\big)}\mathcal {T}^{q^{a+b}}_{m+n}-{\big(q^{-na}-1\big)\over q^{b}-1}\mathcal {T}^{q^{a}}_{m+n}\nonumber\\&+& {\big(q^{-mb}-1\big)\over q^{a}-1}\mathcal {T}^{q^{b}}_{m+n},
	\end{eqnarray}
\end{small}
\begin{small}
	\begin{eqnarray}\label{scrgoAC}
	\Big[{\mathcal T}^{q^{a}}_m, \mathbb{T}^{q^{b}}_n\Big]&=&{\big(q^{a+b}-1\big)\big(q^{-n\,a}-q^{-m\,b+a}\big)\over \big(q^{a}-1\big)\big(q^{b}-1\big)}\mathbb{T}^{q^{a+b}}_{m+n}\nonumber\\ &+&{\big(q^{-m\,b}\,q^{a}-1\big)\over q^{b}-1}\mathbb{T}^{q^{a}}_{m+n} + {\big(q^{-m\,b}-1\big)\over q^{a}-1}\mathbb {T}^{q^{b}}_{m+n} + f(m,n),
	\end{eqnarray}
\end{small}
where
\begin{eqnarray*}
	f(m,n)&=&-{\big(q^{a+b}-1\big)q^{-m\,b-b}\over \big(q^{a}-1\big)\big(q^{b}-1\big)}\mathbb{T}^{q^{a+b}}_{1}+{q^{1\over q^{b}-1}\mathbb{T}^{q^{a}}_{1}}+ {q^{-m\,b-b}\over q^{a}-1}\mathbb{T}^{q^{b}}_{1}
\end{eqnarray*}
and other anti-commutators are zeros.
Setting $a=b=1,$ we obtain:
\begin{small}
	\begin{eqnarray*}
		\Big[\mathcal{T}^{q}_m, \mathcal{T}^{q}_n\Big]={\big(q^{-n}-q^{-m}\big)\over \big(q-1\big)}\,[2]_{q}\mathcal{T}^{q^{2}}_{m+n}-{1\over q-1}\Big(\big(q^{-n}-1\big)-\big(q^{-m}-1\big)\Big) \mathcal{T}^{q}_{m+n},
	\end{eqnarray*}
\end{small}
\begin{small}
	\begin{eqnarray*}
		\Big[\mathcal{T}^{q}_m, \mathbb{T}^{q}_n\Big]={(q^{-n}-q^{-m+1})\over q-1}[2]_{q}\mathbb{T}^{q^{2}}_{m+n}+{1\over q-1}\big((q-q^{-n})-(q^{-m}-1)\big)\mathbb{T}^{q}_{m+n} + f(m,n),
	\end{eqnarray*}
\end{small}
where
\begin{eqnarray*}
	f(m,n)=-{q^{-m-1}\over \big(q-1\big)}\,[2]_{q}\mathbb{T}^{q^{2}}_{1} +{\big(1+q^{-m-1}\big)\over q-1}\mathbb{T}^{q}_{1}
\end{eqnarray*}
and other anti-commutators are zeros.
%We consider the $n-$ bracket defined by:
%\begin{eqnarray*}
%	\Big[\mathcal {T}^{q^{a_1}}_{m_1},\cdots,\mathcal{T}^{q^{a_n}}_{m_n}
%	\Big]:=\epsilon^{i_1 \cdots i_n}_{1 \cdots n}\,\mathcal{T}^{q^{a_{i_1}}}_{m_{i_1}} \cdots \mathcal {T}^{q^{a_{i_n}}}_{m_{i_n}}.
%\end{eqnarray*}
We study the case with the same $q^{a}.$ Then, 
%\begin{eqnarray*}
%	\Big[\mathcal {T}^{q^{a}}_{m_1},\cdots,\mathcal{T}^{q^{a}}_{m_n}
%	\Big]=\Gamma^{1\cdots n}_{1\cdots n}\,\mathcal {T}^{q^{a}}_{m_{1}}\cdots \mathcal{T}^{q^{a}}_{m_{n}}.
%\end{eqnarray*}
putting $a=b$ in the relation (\ref{scrtoAC}), we obtain:
\begin{small}
	\begin{eqnarray*}
		\Big[\mathcal {T}^{q^{a}}_m, \mathcal{T}^{q^{a}}_n\Big]={\big(q^{-n\,a}-q^{-m\,a}\big)\over \big(q^{a}-1\big)}\,[2]_{q^{a}}\mathcal {T}^{q^{2\,a}}_{m+n}-{1\over q^{a}-1}\Big(\big(q^{-n\,a}+1\big)-\big(q^{-m\,a}+1\big)\Big) \mathcal{T}^{q^{a}}_{m+n}
	\end{eqnarray*}
\end{small}
%The $n-$ bracket takes the following form:
and
\begin{small}
	\begin{eqnarray*}
		\Big[\mathcal{T}^{q^{a}}_{m_1},\cdots, \mathcal {T}^{q^{a}}_{m_n}\Big]&=&{(-1)^{n+1}\over (q^{a}-1)^{n-1}}\Big( M^n_{a}[n]_{q^{a}}\mathcal{T}^{q^{na}}_{m_1+\cdots+m_n}\nonumber\\&-& {[n-1]_{q^{a}}\over  q^{-a(\sum_{l=1}^{n}m_l)}}(M^n_{a}+ C^n_{a})\mathcal {T}^{q^{(n-1)a}}_{m_1+\cdots+m_n}\Big),
	\end{eqnarray*}
\end{small}
where 
\begin{small}
	\begin{eqnarray*}
		M^n_{a}&=& q^{-a(n-1)\sum_{s=1}^{n}m_s}\Big((q^{a}-1)^{n\choose 2}\prod_{1\leq j < k \leq n}\Big([m_k]_{q^{a}}-[m_j]_{q^{a}}\Big)\nonumber\\&+&\prod_{1\leq j < k \leq n}(q^{am_k}-q^{am_j})\Big)
	\end{eqnarray*}
	and 
	\begin{eqnarray*}
		C^{n}_{a}
		&=&q^{-a(n-1)\sum_{s=1}^{n}m_s}\Big((q^{a}-1)^{n\choose 2}\prod_{1\leq j < k \leq n}([m_k]_{q^{a}}-[m_j]_{q^{a}})\nonumber\\&+&(-1)^{n-1}\prod_{1\leq j < k \leq n}(q^{am_k}-q^{am_j})\Big).
	\end{eqnarray*}
\end{small}
From the super multibracket of order $n$ (\ref{smb}), we define the $q-$ deformed $n-$ bracket as follows:
\begin{small}
	\begin{eqnarray*}
		\big[\mathcal{T}^{q^{a}}_{m_1},\mathcal{T}^{q^{a}}_{m_2},\ldots, \mathbb{T}^{q^{a}}_{m_n}\big]:=\sum_{j=0}^{n-1}(-1)^{n-1+j}\epsilon^{i_1\ldots i_{n-1}}_{12\ldots n-1}\mathcal{T}^{q^{a}}_{m_{i_1}}\ldots \mathcal{T}^{q^{a}}_{m_{i_j}}\mathbb{T}^{q^{a}}_{m_{n}}\mathcal{T}^{q^{a}}_{m_{i_{j+1}}}\ldots \mathcal{T}^{q^{a}}_{m_{i_{n-1}}}.
	\end{eqnarray*}
\end{small}
From the relation (\ref{scrgoAC}) with $a=b,$ we obtain:
\begin{small}
	\begin{eqnarray}
	\Big[\mathcal {T}^{q^{a}}_m, \mathbb{T}^{q^{a}}_n\Big]&=&{(q^{-n\,a}-q^{-(m-1)a})\over (q^{a}-1)}[2]_{q^{a}}\mathbb{T}^{q^{2a}}_{m+n}+{1\over q^{a}-1}\big((q^{a}-q^{-na})\nonumber\\ &+&(q^{-ma}-1)\big)\mathbb{T}^{q^{a}}_{m+n}+ f(m,n),
	\end{eqnarray}
	where
	\begin{eqnarray*}
		f(m,n)&=&-{q^{-m\,a-a}\over \big(q^{a}-1\big)}[2]_{q^{a}}\mathbb{T}^{q^{2a}}_{1} + {1\over q^{a}-1}\bigg(1+q^{-m\,a-a}\bigg)\mathbb{T}^{q^{a}}_{1}.
	\end{eqnarray*}
\end{small}
Thus, the super $n-$ bracket takes the form:
\begin{small}
	\begin{eqnarray*}
		\Big[\mathcal{T}^{q^{a}}_{m_1},\cdots, \mathbb{T}^{q^{a}}_{m_n}\Big]&=&{(-1)^{n+1}\over \big(q^{a}-1\big)^{n-1}}\Big( A^n_{a}[n]_{q^{a}}\mathbb{T}^{q^{n\,a}}_{m_1+\cdots+m_n}\nonumber\\ &-& [n-1]_{q^{a}}\big(F^n_{a}+ S^n_{a}\big){\mathcal T}^{q^{(n-1)a}}_{m_1+\cdots+m_n}\Big)+ f\big(m_{1},\ldots, m_{n}\big),
	\end{eqnarray*}
\end{small}
where 
\begin{small}
	\begin{eqnarray*}
		A^n_{a}&=& q^{-a(n-1)\sum_{s=1}^{n}(m_s-1)}\big(q^{a}-1\big)^{n\choose 2}\prod_{1\leq j < k \leq n}\Big([m_k-1]_{q^{a}}-[m_j]_{q^{a}}\Big),
	\end{eqnarray*}
	\begin{eqnarray*}
		F^n_{a}&=& q^{-a(n-1)\sum_{s=1}^{n}m_s}\big(q^{a}-1\big)^{n\choose 2}\prod_{1\leq j < k \leq n}\Big([m_k]_{q^{a}}-[m_j]_{q^{a}}\Big),
	\end{eqnarray*} 
	\begin{small}
		\begin{eqnarray*}
			S^{n}_{a}
			&=&\big(q^{a}-1\big)^{n\choose 2}\prod_{1\leq j < k \leq n}\Big([m_k]_{q^{a}}-[m_j]_{q^{a}}p^{n\choose 2}\Big)+(-1)^{n-1}\prod_{1\leq j < k \leq n}\Big(q^{am_k}-q^{am_j}q^{n\choose 2}\Big)
		\end{eqnarray*}
	\end{small}
	and 
	\begin{eqnarray*}
		f\big(m_{1},\ldots, m_{n}\big)={(-1)^{n+1}q^{-(m+1)a}\over \big(q^{a}-1\big)^{n-1}}\Big( [n]_{q^{a}}\mathbb{T}^{q^{n\,a}}_{1}- {[2(m+1)]_{q^{a}}\over [m+1]_{q^{a}}}\mathbb{T}^{q^{(n-1)a}}_{1}\Big).
	\end{eqnarray*}
\end{small}
The operators (\ref{stoAC}) and (\ref{sgoAC}) take the following forms:
\begin{small}
	\begin{eqnarray*}
		\mathcal{T}^{q^{a}}_{m}&=&[m+\gamma]_{q^{a}}m!{\partial\over \partial t_m}+ {q^{m+\gamma}\over q^{a} - q^{-a}}\sum_{k=1}^{\infty}{(k+m)!\over k!}B_k(t^{a}_1,\cdots,t^{a}_k){\partial\over \partial t_{k+m}}\\
		\,\mathbb{T}^{q^{a}}_{m}&=&\theta\bigg([m+\gamma]_{q^{a}}m!{\partial\over \partial t_m}+ {q^{m+\gamma}\over q^{a} - q^{-a}}\sum_{k=1}^{\infty}{(k+m)!\over k!}B_k(t^{a}_1,\cdots,t^{a}_k){\partial\over \partial t_{k+m}}\bigg).
	\end{eqnarray*}
\end{small} 
Putting $\bar{m}=m+\gamma,\quad \bar{n}=n+\gamma,$ and by changing $n!\,{\partial\over \partial t_n}\longleftrightarrow x^n,$ we show directly that  the products $\mathcal{T}^{q^{a}}_{m}\,.\mathcal{T}^{q^{b}}_{n}$ and $\mathcal{T}^{q^{a}}_{m}\,.\mathbb{T}^{q^{b}}_{n}$ are respectively equivalent to (\ref{ACprod1}) and (\ref{ACprod2}).
\section{Relevant particular cases}
%\begin{remark}
	Particular cases of super Virasoro $n-$ algebra and application associated to different quantum algebras in the literature are deduced as follows:  
	%\begin{enumerate}
	\subsection{{\bf Jagannathan- Srinivasa} deformation \cite{JS}}
	%\item[(i)]
			%					\item[(ii)]
			Taking ${\mathcal R}(x,y)={x-y\over p-q},$	we obtain: 
			%{\bf Jagannathan- Srinivasa} deformation \cite{JS}
			the algebra endomorphism $\sigma$ on $\mathcal{B}$ is defined by:
			\begin{eqnarray*}
				\sigma(t^n):=\big(p\,q\big)^n\,t^n\quad\mbox{and}\quad \sigma(\theta):=(p\,q)\,\theta.
			\end{eqnarray*}
			We define also the two $(p,q)-$ deformed linear maps by:
			\begin{eqnarray*}
				\left \{
				\begin{array}{l}
					\partial_t(t^n)=[n]_{p,q}\,t^n\mbox{,}\quad \partial_t(\theta\,t^n)=[n]_{p,q}\,\theta\,t^n, \\
					\\
					\partial_{\theta}(t^n)=0\mbox{,}\quad \partial_{\theta}(\theta\,t^n)=\big(p\,q\big)^n\,t^n.
				\end{array}
				\right .
			\end{eqnarray*}
			The linear map $\Delta=\partial_{t}+\theta \partial_{\theta}$ on  ${\mathcal B}$ is an even $\sigma$-derivation, and satisfy the following relations:
			\begin{eqnarray*}
				\,\Delta(x\,y)&=&\Delta(x)\,y+\sigma(x)\Delta(y),\nonumber\\
				\,\Delta(t^n)&=& [n]_{p,q}\,t^n\quad\mbox{and}\quad \Delta(\theta\,t^n)= \big([n]_{p,q} + \big(p\,q\big)^n\big)\,\theta\,t^n. 
			\end{eqnarray*}		
			It  is generated by bosonic and fermionic operators $l^{p,q}_m=-t^m\,\Delta$ of parity $0$ and $G^{p,q}_m=-\theta\,t^m\,\Delta$ of parity $1$ verifying the following commutations  relations:
			\begin{eqnarray*}
				\,\big[l^{p,q}_{m_1},l^{p,q}_{m_2}\big]_{\hat{x}, \hat{y}}&=&\big([m_1]_{p,q}-[m_2]_{p,q}\big)\,l^{p,q}_{m_1+m_2},\nonumber\\
				\,\big[l^{p,q}_{m_1}, G^{p,q}_{m_2}\big]_{x,y}&=&\big([m_1]_{p,q}-[m_2+1]_{p,q}\big)\,G^{p,q}_{m_1+m_2},\nonumber\\
				\,\big[G^{p,q}_{m_1},G^{p,q}_{m_2}\big]&=&0,
			\end{eqnarray*}
			where\begin{eqnarray}\label{JScoefcom}
			\left \{
			\begin{array}{l}
			\hat{x}=\chi_{m_1m_2}(p,q)\mbox{,}\quad \hat{y}=(pq)^{m_2-m_1}\,\chi_{m_1m_2}(p,q), \\
			\\
			x=\tau_{m_1m_2}\mbox{,}\quad y=(pq)^{1+m_2-m_1}\,\tau_{m_1m_2}, \\
			\\
			\chi_{m_1m_2}(p,q)={[m_1]_{p,q}-[m_2]_{p,q}\over (pq)^{m_2-m_1}\,[m_1]_{p,q}-[m_2]_{p,q}}\\
			\\
			\tau_{m_1m_2}(p,q)={[m_1]_{p,q}-[m_2+1]_{p,q}\over (pq)^{1+m_2-m_1}\,[m_1]_{p,q}-[m_2]_{p,q}-(pq)^{m_2}}.
			\end{array}
			\right .
			\end{eqnarray}
			The $(p,q)-$ deformed $n-$ bracket $(n\geq 3)$ are defined  as follows:
			\begin{small}
				\begin{eqnarray*}
					\big[l^{p,q}_{m_1},\ldots, l^{p,q}_{m_n}\big]&:=&\bigg({p^{-\sum_{l=1}^{n}m_l}+q^{-\sum_{l=1}^{n}m_l}\over 2}\bigg)^{\alpha}\epsilon^{i_1i_2\cdots i_n}_{12\cdots n}\nonumber\\&\times&(p\,q)^{\sum_{j=1}^{n}\big(\lfloor {n\over 2} \rfloor-j+1\big)m_{i_j}}l^{p,q}_{m_{i_1}}\ldots
					l^{p,q}_{m_{i_n}},
				\end{eqnarray*}
			\end{small}
			and
			\begin{small}
				\begin{eqnarray*}
					\big[l^{p,q}_{m_1},l^{p,q}_{m_2},\ldots, G^{p,q}_{m_n}\big]&:=&\bigg({p^{-\sum_{l=1}^{n}m_l}+q^{-\sum_{l=1}^{n}m_l}\over 2}\bigg)^{\alpha}\sum_{j=0}^{n-1}(-1)^{n-1+j}\epsilon^{i_1\ldots i_{n-1}}_{12\ldots n-1}\nonumber\\
					&\times&(p\,q)^{\beta}l^{p,q}_{m_{i_1}}\ldots l^{p,q}_{m_{i_j}}
					G^{p,q}_{m_{n}}l^{p,q}_{m_{i_{j+1}}}\ldots l^{p,q}_{m_{i_{n-1}}},
				\end{eqnarray*}
			\end{small}
			where $\beta=\sum_{k=1}^{j}\big(\lfloor {n\over 2} \rfloor-k+1\big)m_{i_k}+\big(\lfloor  {n\over 2} \rfloor-1\big)\big(m_n+1\big)+\sum_{k=j+1}^{n-1}\big(\lfloor {n\over 2} \rfloor -k\big)m_{i_k},$  $\alpha={1+(-1)^n \over 2},$    and $\lfloor n \rfloor=Max\{m\in\mathbb{Z}\ m\leq n\}$ is the floor function. Then, 
			the  generators $l^{p,q}_{m}$ and $G^{p,q}_{m}$ satisfy the commutation relations:
			\begin{eqnarray*}
				\big[l^{p,q}_{m_1},l^{p,q}_{m_2},\ldots, l^{p,q}_{m_n}\big]&=&{\big(q-p\big)^{n-1\choose 2}\over (p\,q)^{\lfloor {n-1\over 2}\rfloor\sum_{l=1}^{n}m_l}}\Big({p^{-\sum_{l=1}^{n}m_l}+q^{-\sum_{l=1}^{n}m_l}\over 2}\Big)^{\alpha}\nonumber\\&\times&\prod_{1\leq i < j\leq n}\Big([m_i]_{p,q}-[m_j]_{p,q}\Big)l_{\sum_{l=1}^{n}m_l},
			\end{eqnarray*}
			\begin{small}
				\begin{eqnarray*}
					\big[l^{p,q}_{m_1},l^{p,q}_{m_2},\cdots,G^{p,q}_{m_n}\big]&=& {\big(q-p\big)^{{n-1\choose 2}}\over \big(pq\big)^{\lfloor {n-1\over 2}\rfloor \sum_{l=1}^{n}m_l+1}}\Big({p^{-\sum_{l=1}^{n}m_l-1}+q^{-\sum_{l=1}^{n}m_l-1}\over 2}\Big)^{\alpha}\nonumber\\&\times&
					\prod_{1\leq i<j\leq n-1} \big([m_i]_{p,q}- [m_j]_{p,q}\big)\prod_{i=1}^{n-1} \big([m_i]_{p,q}- [m_n+1]_{p,q}\big)G^{p,q}_{\sum_{l=1}^{n}m_l}
				\end{eqnarray*}
			\end{small}
			and other anti-commutators are zeros. Furthermore, the corresponding
			   Virasoro $2n-$ algebra  is deduced as: \begin{eqnarray*}
				\big[L_{m_1},\cdots,L_{m_{2n}}\big] =g_{p,q}(m_1,\cdots,m_{2n}) + C_{p,q}(m_1,\cdots,m_{2n}),
			\end{eqnarray*}
			where
			\begin{small}
				\begin{eqnarray}\label{jsgv}
				g_{p,q}(m_1,\cdots,m_{2n})&=&{(q-p)^{{2n-1\choose 2}}\over 2(pq)^{(n-1) \sum_{l=1}^{2n}m_l}}\Big(p^{-\sum_{l=1}^{2n}m_l}+ q^{-\sum_{l=1}^{2n}m_l}\Big)\nonumber\\
				&\times&\prod_{1\leq i< j\leq 2n}\Big([m_i]_{p,q}-[m_j]_{p,q}\Big)L_{\sum_{l=1}^{2n}m_l}
				\end{eqnarray}
				and
				\begin{eqnarray}\label{jscv}
				C_{p,q}(m_1,\cdots,m_{2n})&=&{c(p,q)\epsilon^{i_1\cdots i_{2n}}_{1\cdots 2n}\over 6\times  2^n\times n!}\prod_{l=1}^{n}{[m_{i_{2l-1}}-1]_{p,q}\over (pq)^{m_{2l-1}}\big(p^{m_{i_{2l-1}}}+q^{m_{i_{2l-1}}}\big)}\nonumber\\&\times& [m_{i_{2l-1}}]_{p,q}[m_{i_{2l-1}}+1]_{p,q}
				\delta_{m_{i_{2l-1}}+ m_{i_{2l}},0.}
				\end{eqnarray}	
			\end{small}
				Several examples are deduced as follows:
			\begin{enumerate}
				\item [(a)]
				Taking $n=2$ in the realtions (\ref{jsgv}) and (\ref{jscv}), we obtain the $(p,q)-$ deformed Virasoro $4-$ algebra:
				\begin{small}
					\begin{eqnarray*}
						\big[L_{m_1},L_{m_2},L_{m_3},L_{m_{4}}\big] =g_{p,q}(m_1,m_2,m_3,m_{4})+C_{p,q}(m_1,\cdots,m_{4}),
					\end{eqnarray*}
					where
					\begin{eqnarray*}
						g_{p,q}(m_1,m_2,m_3,m_4) &=& {(q-p)^{3}\over \big(pq\big)^{ m_1+m_2+m_3+m_4}}\Big(p^{-\sum_{l=1}^{4}m_l}+ q^{-\sum_{l=1}^{4}m_l}\Big)\nonumber\\&\times&\prod_{1\leq i < j\leq 4}\Big([m_i]_{p,q}-[m_j]_{p,q}\Big)L_{\sum_{l=1}^{4}m_l}
					\end{eqnarray*}
					and
					\begin{eqnarray*}
						C_{p,q}(m_1,\cdots,m_{4})&=&{c(p,q)\,\epsilon^{i_1\cdots i_{4}}_{1\cdots 4}\over 48}\prod_{l=1}^{2}\big(p\,q\big)^{-m_{2l-1}}{[m_{2l-1}]_{p,q}\over [2\,m_{2l-1}]_{p,q}}\nonumber\\&\times& [m_{i_{2l-1}}-1]_{p,q}[m_{i_{2l-1}}]_{p,q}[m_{i_{2l-1}}+1]_{p,q}
						\,\delta_{m_{i_{2l-1}}+ m_{i_{2l}},0.}
					\end{eqnarray*}
				\end{small}
				\item [(b)]The $(p,q)-$ deformed Virasoro $6-$ algebra is deduced from the generalization by taking $n=3:$
				\begin{eqnarray*}
					\big[L_{m_1},\cdots,L_{m_{6}}\big] =g_{p,q}(m_1,\cdots,m_{6})+C_{p,q}(m_1,\cdots,m_{6}),
				\end{eqnarray*}
				where
				\begin{small}
					\begin{eqnarray*}
						g_{p,q}(m_1,\cdots,m_{6})&=&{(q-p)^{10}\over (pq)^{2 \sum_{l=1}^{6}m_l}}\Big(p^{-\sum_{l=1}^{6}m_l}+ q^{-\sum_{l=1}^{6}m_l}\Big)\nonumber\\&\times&\prod_{1\leq i< j\leq 6}\Big([m_i]_{p,q}-[m_j]_{p,q}\Big)L_{\sum_{l=1}^{6}m_l}
					\end{eqnarray*}
					and
					\begin{eqnarray*}
						C_{p,q}(m_1,\cdots,m_{6})&=&{c(p,q)\epsilon^{i_1\cdots i_6}_{1\cdots 6}\over 288}\prod_{l=1}^{3}\big(pq\big)^{-m_{2l-1}}{[m_{2l-1}]_{p,q}\over [2m_{2l-1}]_{p,q}}\nonumber\\&\times& [m_{i_{2l-1}}-1]_{p,q}[m_{i_{2l-1}}]_{p,q}[m_{i_{2l-1}}+1]_{p,q}
						\,\delta_{m_{i_{2l-1}}+ m_{i_{2l}},0.}
					\end{eqnarray*}
				\end{small}
			\end{enumerate}
				The $(p,q)-$ deformed  super Jacobi identity  is given by :\begin{eqnarray*}
						\sum_{(i,j,l)\in\mathcal{C}(n,m,k)}\,(-1)^{|A_i||A_l|}\big[\rho(A_i),\big[A_j,A_l\big]_{p,q}\big]_{p,q}=0,
					\end{eqnarray*} 
					where  $\rho(l^{p,q}_{m_1})=(p^{m_1}+q^{m_1})\,l^{p,q}_{m_1},$ $\rho(G^{p,q}_{m_1})=(p^{m_1+1}+q^{m_1+1})\,G^{p,q}_{m_1} $ and $\mathcal{C}(n,m,k)$ denotes the cyclic permutation of $(n,m,k)$.
					
					Moreover, 
 the operators $\bar{l}^{p,q}_{m}$ and $\bar{G}^{p,q}_{m}$ satisfy  the following commutation relations:
					\begin{small}
						\begin{eqnarray*}
							\big[\bar{l}^{p,q}_{m_1},\bar{l}^{p,q}_{m_2}\big]_{\hat{x}, \hat{y}}&=&\big([m_1]_{p,q}-[m_2]_{p,q}\big)\,\bar{l}^{p,q}_{m_1+m_2} + {c(p,q)(pq)^{m_1}[m_1]_{p,q}\over 6[2m_1]_{p,q}}\nonumber\\&\times& [m_1+1]_{p,q}[m_1]_{p,q}[m_1-1]_{p,q}\delta_{m_1+m_2,0}, \end{eqnarray*}
						and
						\begin{eqnarray*}
							\big[\bar{l}^{p,q}_{m_1}, \bar{G}^{p,q}_{m_2}\big]_{x,y}&=&\big([m_1]_{p,q}-[m_2+1]_{p,q}\big)\,\bar{G}^{p,q}_{m_1+m_2} + {c(p,q)(pq)^{m_1}\,[m_1]_{p,q}\over 6[2m_1]_{p,q}}\nonumber\\&\times& [m_1+1]_{p,q}[m_1]_{p,q}[m_1-1]_{p,q}\,\delta_{m_1+m_2+1,0},
						\end{eqnarray*}
					\end{small}
					where $\hat{x},$ $\hat{y},$ $x,$ and $y$ are given by the relation (\ref{JScoefcom})
					%			\item[(iv)] The operators $\bar{l}_{m}$ and $\bar{G}_{m}$ satisfy  the following commutation relations:
					%			\begin{eqnarray*}\label{gsVa}
					%				\big[\bar{l}_{m_1},\bar{l}_{m_2}\big]_{\hat{x}, \hat{y}}&=&\big([m_1]^Q_{p,q}-[m_2]^Q_{p,q}\big)\,\bar{l}_{m_1+m_2} + {c(p,q)(pq)^{m_1}[m_1]^Q_{p,q}\over 6[2m_1]^Q_{p,q}}\nonumber\\&\times& [m_1+1]^Q_{p,q}[m_1]^Q_{p,q}[m_1-1]^Q_{p,q}\delta_{m_1+m_2,0}, \end{eqnarray*}
					%			and
					%			\begin{small}
					%				\begin{eqnarray*}
					%					\big[\bar{l}_{m_1}, \bar{G}_{m_2}\big]_{x,y}&=&\big([m_1]^Q_{p,q}-[m_2+1]^Q_{p,q}\big)\,\bar{G}_{m_1+m_2} + {c(p,q)(pq)^{m_1}\,[m_1]^Q_{p,q}\over 6[2m_1]^Q_{p,q}}\nonumber\\&\times& [m_1+1]^Q_{p,q}[m_1]^Q_{p,q}[m_1-1]^Q_{p,q}\,\delta_{m_1+m_2+1,0},
					%				\end{eqnarray*}
					%			\end{small}
					%			where $\hat{x},$ $\hat{y},$ $x,$ and $y$ are given by the relation (\ref{HNcoefcom})
%				\end{enumerate}
%			\end{remark} 
			The super Virasoro $2n-$ algebra  is presented as follows: \begin{small}
				\begin{eqnarray*}
					\big[\bar{L}^{p,q}_{m_1},\cdots,\bar{L}^{p,q}_{m_{2n}}\big] =g_{p,q}(m_1,\cdots,m_{2n}) + C_{p,q}(m_1,\cdots,m_{2n}),
				\end{eqnarray*}
				\begin{eqnarray*}
					\big[\bar{L}^{p,q}_{m_1},\bar{L}^{p,q}_{m_2},\cdots, \bar{G}^{p,q}_{m_{2n}}\big]= f_{p,q}(m_1,m_2,\cdots m_{2n})+ {\mathcal CS}_{p,q}(m_1,\cdots m_{2n}),
				\end{eqnarray*}
			\end{small}
			where $g_{p,q}(m_1,\cdots,m_{2n})$ and $C_{p,q}(m_1,\cdots,m_{2n})$ are given by the relations (\ref{jsgv}), (\ref{jscv}), 
			\begin{small}
				\begin{eqnarray*}
					f_{p,q}(m_1,\cdots m_{2n})&=&{\big(q-p\big)^{{2n-1\choose 2}}\over 2(pq)^{-(n-1) \sum_{l=1}^{2n}m_l+1}}\Big( p^{\sum_{l=1}^{2n}m_l-1}+q^{\sum_{l=1}^{2n}m_l-1}\Big)\nonumber\\&\times&\prod_{1\leq i<j\leq 2n-1} \big([m_i]_{p,q}- [m_j]_{p,q}\big)
					\prod_{i=1}^{2n-1} \big([m_i]_{p,q}- [m_{2n}+1]_{p,q}\big)G_{\sum_{l=1}^{2n}m_l},
				\end{eqnarray*}
			\end{small}
			\begin{small}
				\begin{eqnarray*}
					{\mathcal CS}_{p,q}(m_1,m_2,\cdots m_{2n})&=&\sum_{k=1}^{2n-1}{(-1)^{k+1}c(p,q)(pq)^{-m_k}\over 6\times 2^{n-1}(n-1)!}{1\over p^{m_k}+q^{m_k}}\nonumber\\&\times&[m_k+1]_{p,q}[m_k]_{p,q}[m_k-1]_{p,q}\delta_{m_k+m_{2n}+1,0}\nonumber\\&\times&\epsilon^{i_1\cdots i_{2n-2}}_{j_1\cdots j_{2n-2}}\prod_{s=1}^{n-1}{(pq)^{-i_{2s-1}}\over p^{i_{2s-1}}+q^{i_{2s-1}}}\nonumber\\&\times&[i_{2s-1}+1]_{p,q}\,[i_{2s-1}]_{p,q}\,[i_{2s-1}-1]_{p,q}\delta_{i_{2s-1}+i_{2s},0},
				\end{eqnarray*}
			\end{small}
			with $\{j_1,\cdots, j_{2n-2}\}=\{1,\cdots,\hat{k},\cdots,2n-1\}$ and other anti-commutators are zeros.
			
			 Now, we construct another $(p,q)-$ deformed super Witt $n-$ algebra.    
			We consider  the operators defined by:
			\begin{eqnarray}\label{stopq}
			\,{\mathcal T}^{p^{a},q^{a}}_m&=&\Delta\,z^{m},\\
%			\end{eqnarray}
%			and 
			%\begin{eqnarray}
			\,\mathbb{T}^{p^{a},q^{a}}_m&=&-\theta\,\Delta\,z^{m}\label{sgopq}.
			\end{eqnarray}
%			where $\Delta$ is the linear map satisfying the relations given by (\ref{deltaxy}) and  $\theta$ is the Grassman variable with $\theta^2=0.$
			
			The operators (\ref{stopq}) and (\ref{sgopq}) can be rewritten as:
			\begin{eqnarray*}
				\,\mathcal{T}^{p^{a},q^{a}}_m&=&-[m]_{p^{a},q^{a}}\,z^{m}\\
%			\end{eqnarray*}
%			and
%			\begin{eqnarray*}
				\,\mathbb{T}^{p^{a},q^{a}}_m&=&-\theta\,[m]_{p^{a},q^{a}}\,z^{m}.
			\end{eqnarray*}
			The following products hold.
			\begin{eqnarray}\label{JSprod1}
			\mathcal{T}^{p^{a},q^{a}}_m\,. \mathcal{T}^{p^{b},q^{b}}_n&=&-{\big(p^{a+b}-q^{a+b}\big)p^{-mb}\over \big(p^{a}-q^{a}\big)\big(p^{b}-q^{b}\big)}{\mathcal T}^{p^{a+b},q^{a+b}}_{m+n}\nonumber\\ &+&{q^{-nb}\over p^{b}-q^{b}}{\mathcal T}^{p^{a},q^{a}}_{m+n} + {q^{(m+n)a}p^{-mb}\over p^{a}-q^{a}}{\mathcal T}^{p^{b},q^{b}}_{m+n}
			\end{eqnarray}
			and
			\begin{eqnarray}\label{JSprod2}
			\mathcal{T}^{p^{a},q^{a}}_m\,. \mathbb{T}^{p^{b},q^{b}}_n&=&-{\big(p^{a+b}-q^{a+b}\big)p^{-(m+1)b}\over \big(p^{a}-q^{a}\big)\big(p^{b}-q^{b}\big)}\mathbb{T}^{p^{a+b},q^{a+b}}_{m+n+1}\nonumber\\ &+&{q^{-n\,b}\over p^{b}-q^{b}}\mathbb{T}^{p^{a},q^{a}}_{m+n+1}+ {q^{(m+n+1)\,a}p^{-(m+1)b}\over p^{a}-q^{a}}\mathbb{T}^{p^{b},q^{b}}_{m+n+1} .
			\end{eqnarray}
			and the operators satisfy the following
			commutation relations
			\begin{small}
				\begin{eqnarray}\label{scrtopq}
				\Big[\mathcal{T}^{p^{a},q^{a}}_m, \mathcal{T}^{p^{b},q^{b}}_n\Big]&=&\frac{\big(p^{a+b}-q^{a+b}\big)\big(p^{-na}-p^{-mb}\big)}{ \big(p^{a}-q^{a}\big)\big(p^{b}-q^{b}\big)}\mathcal {T}^{p^{a+b},q^{a+b}}_{m+n}\nonumber\\ &-&{q^{(m+n)b}\big(p^{-na}-q^{-mb}\big)\over p^{b}-q^{b}}\mathcal {T}^{p^{a},q^{a}}_{m+n}+ {q^{(m+n)a}\big(p^{-mb}-q^{-na}\big)\over p^{\alpha}-q^{a}}\mathcal {T}^{p^{b},q^{b}}_{m+n},
				\end{eqnarray}
			\end{small}
			\begin{small}
				\begin{eqnarray}\label{scrgopq}
				\Big[{\mathcal T}^{p^{a},q^{a}}_m, \mathbb{T}^{p^{b},q^{b}}_n\Big]&=&{\big(p^{a+b}-q^{a+b}\big)\big(p^{-n\,a}-p^{-m\,b+a}\big)\over \big(p^{a}-q^{a}\big)\big(p^{b}-q^{b}\big)}\mathbb{T}^{p^{a+b},q^{a+b}}_{m+n}\nonumber\\ &+&{q^{(m+n)b}\big(q^{-m\,b}\,p^{a}-p^{-n\,a}\big)\over p^{b}-q^{b}}\mathbb{T}^{p^{a},q^{a}}_{m+n}\nonumber\\ &+& {q^{(m+n)\,a}\big(p^{-m\,b}q^{a}-q^{-n\,a}\big)\over p^{a}-q^{a}}\mathbb {T}^{p^{b},q^{b}}_{m+n} + f(m,n),
				\end{eqnarray}
			\end{small}
			where
			\begin{eqnarray*}
				f(m,n)&=&-{\big(p^{a+b}-q^{a+b}\big)p^{-m\,b-b}\,q^{(a+b)(m+n)}\over \big(p^{a}-q^{a}\big)\big(p^{b}-q^{b}\big)}\mathbb{T}^{p^{a+b},q^{a+b}}_{1}\nonumber\\ &+&{q^{(m+n)a}\,q^{n\,b}\over p^{b}-q^{b}}\mathbb{T}^{p^{a},q^{a}}_{1}+ {q^{(m+n)(a+b)}p^{-m\,b-b}\,q^{a}\over p^{a}-q^{a}}\mathbb{T}^{p^{b},q^{b}}_{1}
			\end{eqnarray*}
			and other anti-commutators are zeros.
			
			Setting $a=b=1,$ we obtain:
			\begin{small}
				\begin{eqnarray*}
					\Big[\mathcal{T}^{p,q}_m, \mathcal{T}^{p,q}_n\Big]=\frac{\big(p^{-n}-p^{-m}\big)}{\big(p-q\big)}[2]_{p,q}\mathcal{T}^{p^{2},q^{2}}_{m+n}-\frac{q^{m+n}}{p-q}\Big(\big(p^{-n}-q^{-m}\big)-\big(p^{-m}-q^{-n}\big)\Big) \mathcal{T}^{p,q}_{m+n},
				\end{eqnarray*}
				\begin{eqnarray*}
					\Big[\mathcal{T}^{p,q}_m, \mathbb{T}^{p,q}_n\Big]&=&{(p^{-n}-p^{-m+1})\over p-q}\,[2]_{p,q}\mathbb{T}^{p^{2},q^{2}}_{m+n}+f(m,n)\nonumber\\ &+&{q^{m+n}\over p-q}\bigg((q^{-m}\,p-p^{-n})-(p^{-m}\,q-q^{-n})\bigg)\mathbb{T}^{p,q}_{m+n} ,
				\end{eqnarray*}
			\end{small}
			where
			\begin{eqnarray*}
				f(m,n)=-{p^{-m-1}\,q^{2(m+n)}\over \big(p-q\big)}\,[2]_{p,q}\mathbb{T}^{p^{2},q^{2}}_{1} +{q^{m+n}\big(q^{n}+q^{m+n}\,p^{-m-1}\,q\big)\over p-q}\mathbb{T}^{p,q}_{1}
			\end{eqnarray*}
			and other anti-commutators are zeros.
			
			We consider the $n-$ bracket defined by:
			\begin{eqnarray*}
				\Big[\mathcal {T}^{p^{a_1},q^{a_1}}_{m_1},\cdots,\mathcal{T}^{p^{a_n},q^{a_n}}_{m_n}
				\Big]:=\epsilon^{i_1 \cdots i_n}_{1 \cdots n}\,\mathcal{T}^{p^{a_{i_1}},q^{a_{i_1}}}_{m_{i_1}} \cdots \mathcal {T}^{p^{a_{i_n}},q^{a_{i_n}}}_{m_{i_n}}.
			\end{eqnarray*}
				We study the case with the same $(p^{a},q^{a}).$ Then, 
				\begin{eqnarray*}
					\Big[\mathcal {T}^{p^{a},q^{a}}_{m_1},\cdots,\mathcal{T}^{p^{a},q^{a}}_{m_n}
					\Big]=\epsilon^{1\cdots n}_{1\cdots n}\,\mathcal {T}^{p^{a},q^{a}}_{m_{1}}\cdots \mathcal{T}^{p^{a},q^{a}}_{m_{n}}.
				\end{eqnarray*}
				Putting $a=b$ in the relation (\ref{scrtopq}), we obtain:
				\begin{small}
					\begin{eqnarray*}
						\Big[\mathcal {T}^{p^{a},q^{a}}_m, \mathcal{T}^{p^{a},q^{a}}_n\Big]&=&{\big(p^{-n\,a}-p^{-m\,a}\big)\over \big(p^{a}-q^{a}\big)}\,[2]_{p^{a},q^{a}}\mathcal {T}^{p^{2\,a},q^{2\,a}}_{m+n}\nonumber\\&-&{\tau^{(m+n)a}_2\over p^{a}-q^{a}}\Big(\big(p^{-n\,a}-p^{-m\,a}\big)+\big(q^{-n\,a}-q^{-m\,a}\big)\Big) \mathcal{T}^{p^{a},q^{a}}_{m+n}.
					\end{eqnarray*}
				\end{small}
				The $n-$ bracket takes the following form:
				\begin{small}
					\begin{eqnarray*}
						\Big[\mathcal{T}^{p^{a},q^{a}}_{m_1},\cdots, \mathcal {T}^{p^{\alpha},q^{\alpha}}_{m_n}\Big]&=&{(-1)^{n+1}\over \big(p^{a}-q^{a}\big)^{n-1}}\Big( M^n_{\alpha}[n]_{p^{a},q^{a}}\mathcal{T}^{p^{n\alpha},q^{n\alpha}}_{m_1+\cdots+m_n}\nonumber\\ &-& {[n-1]_{p^{\alpha},q^{\alpha}}\over  q^{-a\big(\sum_{l=1}^{n}m_l+1\big)}}\big(M^n_{a}+ C^n_{a}\big)\mathcal {T}^{p^{(n-1)a},q^{(n-1)a}}_{m_1+\cdots+m_n}\Big),
					\end{eqnarray*}
				\end{small}
				where 
				\begin{small}
					\begin{eqnarray*}
						M^n_{a}&=& p^{-a(n-1)\sum_{s=1}^{n}m_s}\Big(\big(p^{a}-q^{a}\big)^{n\choose 2}\prod_{1\leq j < k \leq n}\Big([m_k]_{p^{a},q^{a})}-[m_j]_{p^{a},q^{a}}\Big)\nonumber\\&+&\prod_{1\leq j < k \leq n}\Big(q^{a\,m_k}-q^{a\,m_j}\Big)\Big)
					\end{eqnarray*}
					and 
					\begin{eqnarray*}
						C^{n}_{a}
						&=&q^{-a(n-1)\sum_{s=1}^{n}m_s}\Big(\big(p^{a}-q^{a}\big)^{n\choose 2}\prod_{1\leq j < k \leq n}\Big([m_k]_{p^{a},q^{a}}-[m_j]_{p^{a},q^{a}}\Big)\nonumber\\&+&(-1)^{n-1}\prod_{1\leq j < k \leq n}\Big(p^{a\,m_k}-p^{a\,m_j}\Big)\Big).
					\end{eqnarray*}
				\end{small}
				From the super multibracket of order $n$ (\ref{smb}), we define the $(p,q)-$ deformed super $n-$ bracket as follows:
				\begin{small}
					\begin{eqnarray*}
						\big[\mathcal{T}^{p^{a},q^{a}}_{m_1},\mathcal{T}^{p^{a},q^{a}}_{m_2},\ldots, \mathbb{T}^{p^{a},q^{a}}_{m_n}\big]&:=&\sum_{j=0}^{n-1}(-1)^{n-1+j}\epsilon^{i_1\ldots i_{n-1}}_{12\ldots n-1}\mathcal{T}^{p^{a},q^{a}}_{m_{i_1}}\ldots \mathcal{T}^{p^{a},q^{a}}_{m_{i_j}}\nonumber\\&\times&
						\mathbb{T}^{p^{a},q^{a}}_{m_{n}}\mathcal{T}^{p^{a},q^{a}}_{m_{i_{j+1}}}\ldots \mathcal{T}^{p^{a},q^{a}}_{m_{i_{n-1}}}.
					\end{eqnarray*}
				\end{small}
				Using the relation (\ref{scrgopq}) with $a=b,$ we obtain:
				\begin{small}
					\begin{eqnarray*}
						\Big[\mathcal {T}^{p^{a},q^{a}}_m, \mathbb{T}^{p^{a},q^{a}}_n\Big]&=&{\big(p^{-n\,a}-p^{-(m-1)a}\big)\over \big(p^{a}-q^{a}\big)}[2]_{p^{a},q^{a}}\mathbb{T}^{p^{2a},q^{2a}}_{m+n}+ f(m,n)\nonumber\\ &+&{q^{(m+n)a}\over p^{a}-q^{a}}\bigg(\big(q^{-m\,a}\,p^{a}-p^{-n\,a}\big)+\big(p^{-m\,a}\,q^{a}-q^{-n\,a}\big)\bigg)\mathbb{T}^{p^{a},q^{a}}_{m+n},
					\end{eqnarray*}
					where
					\begin{eqnarray*}
						f(m,n)=-{p^{-ma-a}q^{2a(m+n)}\over \big(p^{a}-q^{a}\big)}[2]_{p^{a},q^{a}}\mathbb{T}^{p^{2a},q^{2a}}_{1}+{q^{(m+n)a}\over p^{a}-q^{a}}\bigg(q^{na}+{q^{(m+n+1)a}\over p^{ma+a}}\bigg)\mathbb{T}^{p^{a},q^{a}}_{1}.
					\end{eqnarray*}
				\end{small}
				Thus, the super $n-$ bracket takes the form:
				\begin{small}
					\begin{eqnarray*}
						\Big[\mathcal{T}^{p^{a},q^{a}}_{m_1},\cdots, \mathbb{T}^{p^{a},q^{a}}_{m_n}\Big]&=&{(-1)^{n+1}\over \big(p^{a}-q^{a}\big)^{n-1}}\Big( A^n_{a}[n]_{p^{a},q^{a}}\mathbb{T}^{p^{n\,a},q^{n\,a}}_{m_1+\cdots+m_n}\nonumber\\ &-& {[n-1]_{p^{a},q^{a}}\over  q^{-a\big(\sum_{l=1}^{n}m_l\big)}}\big(F^n_{a}+ S^n_{a}\big){\mathcal T}^{p^{(n-1)a},q^{(n-1)a}}_{m_1+\cdots+m_n}\Big)+ f\big(m_{1},\ldots, m_{n}\big),
					\end{eqnarray*}
				\end{small}
				where 
				\begin{small}
					\begin{eqnarray*}
						A^n_{a}&=& p^{-a(n-1)\sum_{s=1}^{n}(m_s-1)}\Big(\big(p^{a}-q^{a}\big)^{n\choose 2}\prod_{1\leq j < k \leq n}\Big([m_k-1]_{p^{a},q^{a}}-[m_j]_{p^{a},q^{a}}\Big)\nonumber\\&+&\prod_{1\leq j < k \leq n}\Big(q^{a(m_k-1)}-q^{a\,m_j}\Big)\Big),
					\end{eqnarray*}
					\begin{eqnarray*}
						F^n_{a}&=& p^{-a(n-1)\sum_{s=1}^{n}m_s}\Big(\big(p^{a}-q^{a}\big)^{n\choose 2}\prod_{1\leq j < k \leq n}\Big([m_k]_{p^{a},q^{a}}-[m_j]_{p^{a},q^{a}}q^{n\choose 2}\Big)\nonumber\\&+&\prod_{1\leq j < k \leq n}\Big(q^{a\,m_k}-q^{a\,m_j}q^{n\choose 2}\Big)\Big),
					\end{eqnarray*} 
					\begin{eqnarray*}
						S^{n}_{a}
						&=&q^{-a(n-1)\sum_{s=1}^{n}m_s}\Big(\big(p^{a}-q^{a}\big)^{n\choose 2}\prod_{1\leq j < k \leq n}\Big([m_k]_{p^{a},q^{a}}-[m_j]_{p^{a},q^{a}}p^{n\choose 2}\Big)\nonumber\\&+&(-1)^{n-1}\prod_{1\leq j < k \leq n}\Big(p^{a\,m_k}-p^{a\,m_j}p^{n\choose 2}\Big)\Big)
					\end{eqnarray*}
					and 
					\begin{eqnarray*}
						f\big(m_{1},\ldots, m_{n}\big)&=&{(-1)^{n+1}p^{-(m+1)a}q^{a\sum_{l=1}^{n}m_l}\over \big(p^{a}-q^{a}\big)^{n-1}}\Big( q^{am}[n]_{p^{a},q^{a}}\mathbb{T}^{p^{na},q^{na}}_{1}\nonumber\\&-& \frac{[2(m+1)]_{p^{a},q^{a}}}{ [m+1]_{p^{a},q^{a}}}\mathbb{T}^{p^{(n-1)a},q^{(n-1)a}}_{1}\Big).
					\end{eqnarray*}
				\end{small}
				
				Furthermore, the operators (\ref{stopq}) and (\ref{sgopq}) are presented as follows:
				\begin{small}
					\begin{eqnarray*}
						\mathcal{T}^{{\mathcal R}(p^{a},q^{a})}_{m}&=&[m+\gamma]_{p^{a},q^{a}}m!{\partial\over \partial t_m} + {(pq)^{m+\gamma}\over p^{a} - q^{a}}\sum_{k=1}^{\infty}{(k+m)!\over k!}B_k(t^{a}_1,\cdots,t^{a}_k){\partial\over \partial t_{k+m}}\nonumber\\
						\mathbb{T}^{p^{a},q^{a}}_m&=&\theta\bigg([m+\gamma]_{p^{a},q^{a}}m!{\partial\over \partial t_m} + {(pq)^{m+\gamma}\over p^{a} - q^{a}}\sum_{k=1}^{\infty}\frac{(k+m)!}{ k!}B_k(t^{a}_1,\cdots,t^{a}_k)\frac{\partial}{\partial t_{k+m}}\bigg).
					\end{eqnarray*}
				\end{small} 
				Putting $\bar{m}=m+\gamma,\quad \bar{n}=n+\gamma,$ and by changing $n!\,{\partial\over \partial t_n}\longleftrightarrow x^n,$ we show directly that  the products $\mathcal{T}^{p^{a},q^{a}}_{m}\,.\mathcal{T}^{p^{b},q^{b}}_{n}$ and $\mathcal{T}^{p^{a},q^{a}}_{m}\,.\mathbb{T}^{p^{b},q^{b}}_{n}$ are respectively equivalent to (\ref{JSprod1}) and (\ref{JSprod2}).
					\subsection{{\bf Chakrabarti and Jagannathan} deformation \cite{CJ} }
					%\item[(ii)]
					Setting $\mathcal{R}(x,y)={(1-xy)\over (p^{-1}-q)x},$ we deduce the $(p^{-1},q)-$ deformed super Virasoro $n-$ algebra and application.
		\subsection{{\bf Hounkonnou-Ngompe generalized $q-$ Quesne } deformation \cite{HN}}
				%\item[(iii)]
				The results corresponding here are obtained by taking $\mathcal{R}(x,y)=\frac{(xy-1)}{ (q-p^{-1})y}.$
	\subsection{{\bf Biedenharn-Macfarlane } deformation \cite{B, M}}
	%\item[(v)]
										Putting ${\mathcal R}(x)={x-x^{-1} \over q-q^{-1}},$ we obtain the $q-$ deformed super Virasoro $n-$ algebra.
\section{Concluding and remarks}
We have constructed a super Witt $n$ and   Virasoro $2n-$ algebras from quantum algebras. Moreover, we have generalized this study to investigate  the super $\mathcal{R}(p,q)-$ deformed Witt $n-$ algebra, and  super $\mathcal{R}(p,q)-$ deformed Virasoro $n-$ algebra and discuss a toy model. Particular cases  have been investigated. For further, the super Virasoro algebra with a conformal dimenssion is in preparation for the futur work.
\section*{Acknowledgements}
%\begin{acknowledgements}
This research was partly supported by the SNF Grant No. IZSEZ0\_206010.
%\end{acknowledgements}
%\section{Front matter}
%
%The author names and affiliations could be formatted in two ways:
%\begin{enumerate}[(1)]
%\item Group the authors per affiliation.
%\item Use footnotes to indicate the affiliations.
%\end{enumerate}
%See the front matter of this document for examples. You are recommended to conform your choice to the journal you are submitting to.

%\section{Bibliography styles}
%
%There are various bibliography styles available. You can select the style of your choice in the preamble of this document. These styles are Elsevier styles based on standard styles like Harvard and Vancouver. Please use Bib\TeX\ to generate your bibliography and include DOIs whenever available.
%
%Here are two sample references: \cite{Feynman1963118,Dirac1953888}.

%\section*{References}

%\bibliography{mybibfile}

\end{document}